\documentclass[10pt,aps,pra,twocolumn,showpacs,preprintnumbers,nofootinbib]{revtex4-1}

\usepackage{graphicx}  
\usepackage{tikz}
\usepackage{pgfplots}

\pgfplotsset{compat=1.18}
\usepgfplotslibrary{colormaps}

\usepackage[caption=false]{subfig}
\newcommand\subfigref[1]{\protect\subref{#1}}

\usepackage{multirow}

\usepackage{fancyhdr}
\usepackage{longtable}
\usepackage{parskip}
\usepackage[T1]{fontenc}
\usepackage{dcolumn}   
\usepackage{comment}
\usepackage{hyperref}
\usepackage{tikz}
\usepackage{tikz-3dplot}
\usepackage{pgfplots}

\usepackage{bm}        
\usepackage{amsfonts}  
\usepackage{amsmath}   
\usepackage{amssymb}   
\usepackage{stmaryrd}
\usepackage[toc,page]{appendix}
\usepackage{algpseudocode}
\usepackage{algorithm}
\usepackage{xcolor} 
\usepackage{minted}
\usepackage{hyphenat}
\usepackage{glossaries}
\usepackage{cleveref}
\usepackage{amsthm}

\newcommand{\bra}[1]{\left\langle #1 \right|}

\newcommand{\pwisein}{\left\{ \begin{array}{ll}}
\newcommand{\pwiseout}{\end{array}\right.}
\newcommand{\ket}[1]{\left| #1 \right\rangle}

\renewcommand{\det}[1]{\mathrm{det}\left( #1 \right)}

\newtheorem{theorem}{Theorem}
\newtheorem{lemma}[theorem]{Lemma}
\newtheorem{proposition}[theorem]{Proposition}

\newacronym[shortplural={MPS}]{MPS}{MPS}{matrix\hyp product state}
\newacronym{MPO}{MPO}{matrix-product operator}
\newacronym{SVD}{SVD}{singular\hyp value decomposition}
\newacronym{QCS}{QCS}{quantum\hyp computer simulator}
\newacronym{FSM}{FSM}{finite\hyp state machine}
\newacronym{ACA}{ACA}{adaptive cross approximation}
\newacronym{1D}{1D}{one\hyp dimensional}
\newacronym{QC}{QC}{quantum computer}
\newacronym{CDW}{CDW}{charge\hyp density wave}
\newacronym{SDW}{SDW}{spin\hyp density wave}
\newacronym{ARPES}{ARPES}{angle-resolved photoemission spectroscopy}
\newacronym{OBC}{OBC}{open\hyp boundary conditions}
\newacronym{PBC}{PBC}{periodic\hyp boundary conditions}
\newacronym{TEBD}{TEBD}{time\hyp evolution block-decimation}
\newacronym{TDVP}{TDVP}{time\hyp dependent variational principle}
\newacronym{iff}{iff}{if and only if}
\newacronym{DFT}{DFT}{density\hyp functional theory}
\newacronym{DMFT}{DMFT}{dynamical mean\hyp field theory}
\newacronym{DMRG}{DMRG}{density\hyp matrix renormalization group}
\newacronym{1DMRG}{1DMRG}{single-site density\hyp matrix renormalization group}
\newacronym{2DMRG}{2DMRG}{two-site density\hyp matrix renormalization group}
\newacronym{DMRG3S}{DMRG3S}{strictly single-site density\hyp matrix renormalization group}
\newacronym{iDMRG}{iDMRG}{inifinite\hyp size density\hyp matrix renormalization group}
\newacronym{tDMRG}{tDMRG}{time\hyp dependent density\hyp matrix renormalization group}
\newacronym{QMC}{QMC}{quantum Monte Carlo}
\newacronym{AIM}{AIM}{Anderson impurity model}
\newacronym{SIAM}{SIAM}{single impurity Anderson model}
\newacronym{LDA}{LDA}{local\hyp density approximation}
\newacronym{LBNL}{LBNL}{Lawrence Berkeley National Laboratory}
\newacronym{VQE}{VQE}{variational\hyp quantum eigensolver}
\newacronym{ED}{ED}{exact diagonalization}
\newacronym{QPT}{QPT}{quantum phase transition}
\newacronym{QCP}{QCP}{quantum critical point}
\newacronym{ETH}{ETH}{eigenstate thermalization hypothesis}
\newacronym{EHM}{EHM}{extended Hubbard model}
\newacronym{AKLT}{AKLT}{Affleck\hyp Lieb\hyp Kennedy\hyp Tasaki}
\newglossaryentry{QR}{name={QR},description={QR decomposition}}
\newacronym{TNS}{TNS}{tensor\hyp network state}
\newacronym{SM}{SM}{supplemental material}
\newacronym{NOO}{NOO}{natural orbital occupation}
\newacronym{NO}{NO}{natural orbital}
\newacronym{LRO}{LRO}{long\hyp range order}
\newacronym{qLRO}{qLRO}{quasi\hyp long\hyp range order}
\newacronym{SC}{SC}{Superconductivity}
\newacronym{VBGS}{VBGS}{valence bond ground-state}
\newacronym{PEPS}{PEPS}{projected entangled pair\hyp states}
\newacronym{ALS}{ALS}{alternating least squares}
\newacronym{BdG}{BdG}{Bogoljubov de-Gennes}
\newacronym{TFIM}{TFI}{transverse field Ising model}
\newacronym{PP}{PP}{projected purification}
\newacronym{BEC}{BEC}{Bose\hyp Einstein condensate}
\newacronym{JWT}{JWT}{Jordan\hyp Wigner transformation}
\newacronym{NISQ}{NISQ}{noisy intermediate scale quantum}
\newacronym{NN}{NN}{nearest\hyp neighbor}
\newacronym{NNN}{NNN}{next\hyp nearest\hyp neighbor}
\newacronym{SPDM}{SPDM}{single\hyp particle density matrix} 
\newacronym{PDF}{PDF}{probability density function}
\newacronym{HCB}{HCB}{hardcore bosons}
\newacronym{SF}{SF}{spinless fermions}
\newacronym{AME}{AME}{absolutely maximally entangled}%

\begin{document}

\title{Ensembles of random quantum states tunable from volume law to area law}

\author{Héloïse Albot, Sebastian Paeckel}

\affiliation {\it Department of Physics, Arnold Sommerfeld Center for Theoretical Physics (ASC),
Munich Center for Quantum Science and Technology (MCQST),
Ludwig-Maximilians-Universität München, 80333 München, Germany }

\date{April 16, 2026}

\begin{abstract}  
A standard approach to generate random pure quantum states relies on sampling from the Haar measure. 
However, the entanglement properties of such states present a fundamental challenge for their general applicability.
Here, we introduce the $\sigma$\hyp ensembles -- a family of random quantum states with only a single control parameter.
Crucially, these states are designed such that they can be tuned between volume\hyp law and area\hyp law behavior, which has been a major obstacle thus far.
We construct representatives of this ensemble by imposing a probability distribution on the eigenvalues of the successive subsystems, and subsequently reconstructing a compatible global state using the matrix product state (MPS) formalism.
Due to their area\hyp law entanglement, our approach circumvents the intractability of Haar\hyp random pure states in classical simulations of quantum systems and is more representative of typical Hamiltonian ground states.

\end{abstract}


\maketitle 

\section{Introduction}
The discovery of the intimate connection between the physical properties of quantum many\hyp body systems on the one hand, and their entanglement properties on the other, has been one of the key driving forces for the vast algorithmic improvements in the past two decades~\cite{Vidal2004,Eisert2006,Hastings2007,Schuch2008a,Schuch2008a,Eisert_2010}.
Most notably, relations between the classical simulability of strongly correlated systems, and the scaling of the bipartite entanglement entropy paved the way for powerful numerical tools for a broad class of quantum systems.
In particular, the so-called area law states -- states whose bipartite entanglement entropy scales with the boundary size rather than the volume of the enclosed subsystem -- turned out to be the computationally most relevant class~\cite{Schollwock_2011}.
Given such a classification, one main goal for the development of quantum algorithms for quantum computers is to identify and characterize the entanglement properties of quantum circuits.
Clearly, quantum supremacy can only be achieved if classical simulability is impossible, which then can be traced back to the question, how correlations, and thus entanglement, evolves throughout the circuit.
Here, a widely used model to analyze quantum circuits is using layers of random unitaries~\cite{Fisher2023,schuster2025}.
Crucially, such circuits are known to generate Haar\hyp random states as output states.
However, the ensemble of Haar\hyp random states is known to exhibit a volume law~\cite{Page_1993, Foong_1994, Sen_1996}, while practically relevant quantum circuits should generate outputs that fall into the class of area\hyp law states in nearly all known cases of physical relevance \cite{Hastings_2007, Wolf_2008, Aharonov_2011, Kull_2019, Shor_1997, Grover_1996, Boixo2014}.
This raises the fundamental question of how to properly capture both the intermediate and final states occurring in such circuits.
Moreover, recent results~\cite{Zhang_2022, Li_2023} seem to indicate that on noisy quantum devices, circuits which expected to produce volume\hyp law states are actually generating area\hyp law states.
In order to characterize such circuits, the question emerges whether there exist other types of ensembles of random states and how to generate representatives.
In this work, we address this open problem by introducing a construction scheme for random states, which allows to precisely control the entanglement properties of the generated states.
This problem is particularly challenging because uniformly drawn state coefficients in general generate volume\hyp law states~\cite{Bianchi2022}.
A further complication arises from the fact that area\hyp law states form a set of measure zero in the total Hilbert space~\cite{Eisert_2010}.
Here, we demonstrate via explicit construction that it is indeed possible to sample area\hyp law states, and by tuning only a single control parameter, the corresponding ensemble can be transitioned between area\hyp law and volume\hyp law states.
Thereby, we also introduce a classification of random state ensembles based on the entanglement content.
This also allows to relate the drawn random states to the notion of classical simulability, and we provide an explicit construction scheme in terms of~\glspl{MPS}.
\section{How to sample random states from a set of measure zero\label{sec:sampling:general}}
The most natural approach to generate pure, random quantum states is to use a generalization of the uniform distribution to the case of high\hyp dimensional Hilbert spaces.
For this approach, the idea is to start from one\hyp dimensional representations of the $U(1)$ group, introducing a measure to sample uniformly from the complex unit circle.
The generalization to finite\hyp dimensional representations is straightforward by preserving the group properties in the $N$\hyp dimensional case, and known as the Haar random ensemble~\cite{Haar_1933, Weil_1953, Cartan_1940}.
While this measure is well characterized~\cite{lubkin_1978, Facchi_2008, majumdar2010} and studied in various contexts~\cite{Bengtsson_Zyczkowski_2006, Cappellini_2006, Page_1993, Majumdar_2008, Zyczkowski_2001, Giraud_2007, Znidaric_2007, Chen_2010, loio2024}, it generates uncommon pure states since they are always volume\hyp law states~\cite{Dubey2012}.
Here, by volume law we refer to the linear scaling of the entanglement entropy when varying the subsystem size (see~\cref{eq:entanglement-entropy:scalings}).
To be precise, consider a tensor\hyp product Hilbert space describing $L$ qubits.
The reduced density operator $\hat \rho_A$ is obtained by tracing out the subsystem $B$
\begin{equation}
    \hat \rho_A = \operatorname{Tr}_B(\hat \rho) \; .
\end{equation}
In this work, we consider only bipartitions such that we always have $\lvert A \rvert + \lvert B \rvert = L$, where $\lvert X \rvert$ denotes the number of qubits contained in subsystem $X$.
Furthermore, we use the von Neumann entropy as measure for the entanglement entropy given for a certain bipartition by
\begin{equation}
    S_A \equiv -\operatorname{Tr} \hat \rho_A \log \hat \rho_A \; .
    \label{eq:def_von_neumann_entropy}
\end{equation}
The physically most relevant scaling behaviors of $S_A$ can then be classified according to
\begin{equation}
    S_A \propto
    \begin{cases}
        \lvert A \rvert &\text{volume\hyp law state}\\
        \log \lvert A \rvert &\text{critical state}\\
        \lvert \partial A \rvert &\text{area\hyp law state}
    \end{cases}\label{eq:entanglement-entropy:scalings}
\end{equation}
where $\partial A$ denotes the set of boundary sites of the subsystem $A$.
Note that according to the Haar measure, both critical and area\hyp law states are sets with measure zero.
An intuition why this is the case can be gained by computing the expectation value of the entanglement entropy $\mathbb E[S_A]$ for $\lvert A\rvert < \lvert B\rvert$~\cite{Page_1993,Foong_1994, Sen_1996}
\begin{equation}
    \mathbb E[S_A] = \sum_{k=d^{\lvert B\rvert}+1}^{d^L} \frac{1}{k} - \frac{d^{\lvert B \rvert} - 1}{2d^{\lvert B\rvert}}\;,
    \label{eq:page_theorem}
\end{equation}
where $d$ denotes the local Hilbert space dimension, i.e., $d=2$ in the case of qubits.
A simple estimation then shows that for $1 \ll \lvert A\rvert$
\begin{equation}
    \mathbb E[S_A] \sim \lvert A \rvert \log d\;,
\end{equation}
and thus typical states drawn according to the Haar measure are asymptotically volume\hyp law states.
Let us now illustrate how this limitation can be overcome, considering the case $\lvert A \rvert \leq \lvert B \rvert$.
For that purpose, note that for any bipartition into subsystems $A,B$, the maximally entangled state has density operator
\begin{equation}
    \hat \rho_A = \frac{1}{2^{\lvert A \rvert }} \hat{\mathbf 1}_{2^{\lvert A\rvert} \times 2^{\vert A
    \rvert}} \;. \label{eq:density-operator:volume-law}
\end{equation}
On the other hand, the minimal entanglement entropy is reached by a product state with density operator
\begin{equation}
    \hat \rho_A = \ket{\psi_A}\!\!\bra{\psi_A}\; , \label{eq:density-operator:area-law}
\end{equation}
on each partition, and $\ket{\psi_A} \in \mathcal H_A$ is a pure state in the Hilbert space spanned by the qubits in the subsystem $A$.
We now give a geometric meaning to~\cref{eq:density-operator:volume-law,eq:density-operator:area-law} by assigning the sequence of eigenvalues $\left(\lambda_1,\ldots,\lambda_n\right)$ of a density operator $\hat \rho_A$ to a point on the positive orthant of the $n=2^{\lvert A \rvert}$\hyp dimensional unit sphere
\begin{equation}
    \operatorname{spec} \hat \rho_A = \vec{x}\circ \vec{x} , \quad \text{ for }\vec{x} \in \mathcal{S}^n_+  \;, \label{eq:density-operator:sampling-scheme}
\end{equation}
where $\mathcal{S}^n_+= \{ \vec x \in \mathbb{R}^{n}\; \vert\; \lVert \vec x \rVert_2 = 1, \; \; x_i \geq 0 \; \forall i \in \llbracket 0, n+1 \rrbracket\}$ and $\circ$ denotes the element-wise product.
Note that in doing so, we choose an (arbitrary) assignment between the eigenvalues and the corresponding eigenvectors of $\hat \rho_A$.
To illustrate this point, note that for two density operators $\hat \rho_A, \hat \sigma_A$ with ${\operatorname{spec} \hat \rho_A} = (\lambda_1, \lambda_2)$ and ${\operatorname{spec} \hat \sigma_A} = (\lambda_2, \lambda_1)$, we cannot assume $\hat\rho_A = \hat\sigma_A$ because the corresponding eigenvectors can differ.
Keeping this subtlety in mind, a density operator, as shown in~\cref{eq:density-operator:volume-law}, is now completely specified by the unique point on the positive orthant of the unit sphere, which has coordinates $x_i = 1/\sqrt{2^{\lvert A\rvert}}$ for all $i\in\left\{ 1,\ldots 2^{\lvert A \rvert} \right\}$.
In the case of a pure state, as shown in~\cref{eq:density-operator:area-law}, any point described by a vector $\vec{x} $ parallel to a coordinate axis $\vec e_i$ yields a valid representation.
Since a pure quantum state is fully characterized by all its reduced density operators~\cite{Xin_2017}, the previous observations suggest to introduce an ensemble by uniformly sampling the eigenvalues of the reduced density operators in each bipartition from a subset $\mathcal L \subset \mathcal S^n$, as well as random basis states corresponding to the eigenbasis of the reduced density operators \footnote{Here, we will define $n-1$ reduced density operators over the $2^{L-1}-1$ possible bipartitions. Consequently, our constraints are less restrictive and do not fully characterize the states, leaving some degrees of freedom.}.
We now use this geometric picture to motivate our method to sample area\hyp law states.
The key insight is that upon increasing the dimensionality and assuming a uniform sampling $\mathcal L = \mathcal{S}^{n}_+$, the probability to encounter very small coordinates $\lambda_i$ increases quickly due to the normalization constraint $\sum_{i=1}^{2^{\vert A \rvert}} \lambda_i^2 = 1$.
The resulting points on the unit sphere will therefore have only very few, non\hyp vanishing elements, i.e., their coordinate vectors $\vec \lambda$ are nearly parallel to very few randomly chosen coordinate axis with very high probability.
Thus, owing to the exponential increase of the Hilbert space of the subsystem $A$, a uniform sampling over $\mathcal{S}^n_+$ will almost certainly generate area\hyp law states.
In the next sections, we discuss this idea in more detail and investigate the properties of the suggested ensemble when varying the sampling scheme.
\subsection{$n$-sphere sampling scheme\label{sec:sampling-scheme:eigenvalues}}
\begin{figure}[ht]
    \centering

\begin{tikzpicture}
    \begin{axis}[
        xlabel={Index $i$},
        ylabel={$\bar{\lambda_i}(\sigma)$},
        xmin=-4, xmax=130,  
        ymin=0, ymax=80,
        ymode=log,
        log origin=infty,
        grid=both,
        minor tick num=1,
        legend pos=north east,
        mark=o,
    ]

   \addplot[
    mark=*, 
    mark size=1.5pt, 
    color=blue!70!black, 
   ] table [x index=0, y index=1] {Data/lambda_plot_reglin_whole_orthant_n128_logscale.txt};

    \end{axis}
\end{tikzpicture}
    \caption{\textbf{Mean of the $i$-th largest eigenvalue generated by taking the square of Cartesian coordinates of a uniformly random point on the $n$-sphere with $\mathbf{n=128}$.} We average the decreasingly-ordered eigenvalues over $1000$ simulations and use a logarithmic scale for the y-axis.}
    \label{fig:eigenvalues:uniform-sampling}
\end{figure}

\begin{figure}[ht]
    \centering
    \includegraphics[width=1\linewidth]{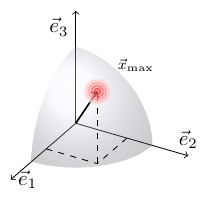}
    \caption{\textbf{Positive orthant of the $n$-sphere from which the random points are sampled.} The squares of their Cartesian coordinates constitute the eigenvalues of the corresponding density matrix. Here, we illustrate the case for $n=3$. When sampling random quantum states that obey an area law, points are drawn uniformly from the entire positive orthant (gray area). When sampling states transitioning between a volume law and an area law, the points follow a Gaussian distribution (red circles). This distribution is centered at the tip of the arrow, corresponding to the coordinates $(x,y,z)=(\frac{1}{\sqrt{n}},\frac{1}{\sqrt{n}},\frac{1}{\sqrt{n}})$, with a standard deviation parameterized by $\sigma$.}
    \label{fig:eigenvalues:gaussian-sampling}
\end{figure}
Let us begin by considering without loss of generality a bipartition where $\lvert A \rvert \leq \lvert B \rvert$ and investigate the case of uniform sampling over $\mathcal L = \mathcal{S}^{n}_+$.
The practical implementation of this sampling procedure is detailed in~\cref{annex:generating_the_eigenvalues_uniform}.
In~\cref{fig:eigenvalues:uniform-sampling}, we display the means of the eigenvalues obtained when averaging the decrasingly\hyp ordered eigenvalues drawn for $1000$ realizations, assuming a Hilbert space dimension $n = 128$ for the subsystem $A$.
We clearly observe an exponential decay, indicating that on average we find indeed an area\hyp law state.
This numerical observation is supported by the explicit calculation of the expectation values of the eigenvalues in~\cref{annex:computation_of_exponential_decay}.
Complementary to this evidence, we calculate the expectation value of the von Neumann entropy as a function of the subsystem's Hilbert space dimension.
For this purpose, we evaluate the expectation value of the von Neumann entropy in~\cref{eq:def_von_neumann_entropy}, which can be expressed as $S_A(\vec\lambda) = -\sum\limits_{i=1}^n \lambda_i \log \lambda_i$, for samples of density operator eigenvalues $\vec\lambda$ uniformly drawn from $\mathcal S^n_+$ with $n=2^{\lvert A \rvert}$
\begin{equation}
    \mathbb E[S_A] = \frac{1}{\lvert S^n_+\rvert}\int_{S^n_+} S_A(\vec\lambda) d^n\vec\lambda \; .
\end{equation}
With the full details of the calculation given in~\cref{appendix:ananlytical_proof_average_von_neumann_entropy}, we obtain
\begin{equation}
    \mathbb{E}[S_A] =  \left(\ln(2)-\frac{1}{2} \right) \left(4- \frac{1}{2^{2^{\vert A \rvert}-3}} \right) \; .
\label{eq:density-operator:our_method}
\end{equation}
Crucially, and in contrast to the Haar measure, this expectation value is not diverging as $n\rightarrow \infty$.
Instead, we arrive at
\begin{equation}
    \lim_{\lvert A \rvert \rightarrow \infty}\mathbb{E}[ S_A ] = 4\ln 2 - 2 \;. \label{eq:density-operator:area-law:exact}
\end{equation}
We note that in this expression there is no dependency on the size of the subsystem $A$, hence $\mathbb{E}[S_A]$ can be associated to area\hyp law states if the bipartition is chosen such that $\lvert \partial A \rvert = const.$ for any possible subsystem $A$.
As a consequence, for one\hyp dimensional systems -- or for higher\hyp dimensional systems mapped into a chain -- the reduced density operators $\hat \rho_A$ describe area\hyp law states on average when the eigenvalues are sampled uniformly from the $n$-sphere.
This holds for bipartitions that decompose the system into one\hyp dimensional chains.
~\Cref{eq:density-operator:volume-law} allows to define a second specific choice of the sample space $\mathcal L$, here with the goal to generate volume\hyp law states.
In fact, choosing $\mathcal L = \left\{\vec x \in \mathbb R^{n+1} \, \vert \, x_i = 1/\sqrt{n}\right\}$, the resulting density operator has von Neumann entropy $S_n = \log_2(n)$ and, since $n = 2^{\lvert A \rvert}$, this corresponds to the specific volume\hyp law state with maximal entanglement entropy.
We now propose to interpolate between these two limiting cases by introducing a Gaussian probability distribution function for the spherical coordinates $\vec \theta$ of the point $\vec X$ on the $n$-sphere 
\begin{equation}
    p_\sigma(\vec \theta) = \frac{1}{\left(2\pi\sigma^2\right)^{n/2}} \mathrm{exp}\left(-\frac{(\Delta \vec \theta)^\mathrm{T} \mathbf G \Delta \vec \theta}{2}\right) \;, \label{eq:eigenvalues:gaussian-sampling}
\end{equation}
where $\Delta \vec \theta = \vec \theta - \vec \theta_\mathrm{max}$ and $\vec \theta_\mathrm{max}$ is the vector of spherical coordinates of $\vec x_\mathrm{max}=\left(1/\sqrt{n}, \ldots, 1/\sqrt{n} \right)^\mathrm{T}$, 
$\mathbf G= \hat{\mathbf 1}_{n \times n} / \sigma^2 $ is the precision matrix and $\sigma > 0$ characterizes the width of the Gaussian distribution.
Once the spherical coordinates are defined, it is easy to deduce the corresponding Cartesian coordinates to obtain the eigenvalues using~\cref{eq:density-operator:sampling-scheme}.
This sampling scheme is depicted in~\cref{fig:eigenvalues:gaussian-sampling} and the procedure is detailed in \cref{annex:generating_the_eigenvalues_gaussian}.
We immediately find that $\sigma\rightarrow 0$ corresponds to the maximally entangled case.
Moreover, in the limit $\sigma\rightarrow\infty$ one may hypothesize that the uniform sampling is recovered.
At this point it should be noted that using a Gaussian distribution for the spherical coordinates, the sampled points are no longer restricted to the positive orthant. 
As a consequence, a set of eigenvalues $\vec{\lambda}$ will have $2^n$ corresponding points on the $n$-sphere, one per orthant. 
Nevertheless, in the limit $\sigma\rightarrow\infty$, one still recovers a uniform sampling, and therefore an ensemble characterized by \cref{eq:density-operator:our_method}.
\subsection{$\sigma$\hyp Ensemble of random pure states\label{sec:sampling-scheme:states}} 
We now generalize the gaussian sampling introduced in~\cref{sec:sampling-scheme:eigenvalues} to generate global random pure states whose entanglement content can be controlled by the width $\sigma$ of the Gaussian distribution~\cref{eq:eigenvalues:gaussian-sampling}.
Given a set of density matrices, the problem of finding a compatible global pure state is known as the quantum marginal problem. This problem is generally computationally intractable, even for a quantum computer~\cite{QMA_Liu_2007}. 
While compatibility conditions have been established for certain configurations~\cite{Lieb_1973, Higuchi_2003, bravyi2003requirementscompatibilitylocalmultipartite, Klyachko_2004, Klyachko_2006, Christandl_2017, Makhina2023}, no general solution exists. 
Resolving this problem is beyond the scope of this paper.
Here, we present a method to construct a compatible global pure state in the special case where we have an ensemble of the singular value diagonals $S^{[1]}, \ldots S^{[L-1]}$, referred to as a \emph{$\sigma$\hyp ensemble}.
This state is represented as a left\hyp canonical~\gls{MPS}~\cite{Schollwock_2011}: 
\begin{equation}
    \ket{\psi} = \sum_{\sigma_1,\ldots,\sigma_L} A^{\sigma_1} \ldots A^{\sigma_L}\ket{\sigma_1,\ldots,\sigma_L} \;,
\end{equation}
where $A^{\sigma_l} \in \mathbb C^{m_{l-1} \times m_l} $ are left\hyp normalized matrices, i.e., $\sum_{\sigma_l} \left(A^{\sigma_l} \right)^{\dagger} A^{\sigma_l} = \hat{\mathbf 1}_{m_l \times m_l}$, with bond dimension $m_l\in\mathbb N$.
Here, $\sigma_l\in\left\{0,1\right\}$ denotes the basis states of the $l$-th local qubit.
Suppose that the left\hyp normalized matrices $\{A_{\sigma_l}\}_{\sigma_l}$ have been constructed up to site $0\leq l \leq L -1$ such as for $1\leq i \leq l$, the diagonal values of $ (S^{[i]})^2$ are the singular values of $\operatorname{Tr}_{\sigma_{i+1},\ldots, \sigma_L}(\ket{\psi}\bra{\psi})$ where $\operatorname{Tr}_{\sigma_{i+1},\ldots, \sigma_{L} }$ denotes the partial trace operator over the sites $i+1, \ldots, L$. 
We now proceed to construct the set $\{A_{\sigma_{l+1}}\}_{\sigma_{l+1}}$.
As shown in~\cite{Schollwock_2011}, the state $\ket{\psi}$ can be expressed as:
\begin{equation}
    \ket{\psi} = \sum_{\sigma_1,\ldots,\sigma_L} A^{\sigma_1} \ldots A^{\sigma_l} S^{[l]} M^{\sigma_{l+1}} M^{\sigma_{l+2}} \ldots M^{\sigma_{L}} \ket{\sigma_1,\ldots,\sigma_L} \;
    \label{eq:mixed-canonical-MPS} 
\end{equation}
where $S^{[l]}$ is the diagonal matrix composed of the singular values on the bond $\left( l,l+1\right)$. 
If $l=0$, $S^{[0]}=\begin{pmatrix} 1\end{pmatrix}$, otherwise $S^{[l]}$ is determined via the procedure in~\cref{sec:sampling-scheme:eigenvalues}.
The matrices $\{M^{\sigma_{l+1}}\}_{\sigma_{l+1}}$ must satisfy the following~\gls{SVD} relation:
\begin{equation}
    \Sigma^{[l]} \Omega^{[l]}M^{[l+1]} = U^{[l+1]}S^{[l+1]}W^{[l+1]}
  \label{eq:svd:Sigma_Omega_B}
\end{equation}
where $M^{[l+1]}$ is defined as
\begin{equation}
    M^{[l+1]}
    =
    \left(
        \begin{array}{c}
            M^{\sigma_{l+1}=0} \\ 
            \vdots \\
            M^{\sigma_{l+1}=d-1}
        \end{array}
    \right) \; , 
\end{equation}
$\Sigma^{[l]}$ is a diagonal matrix:
\begin{equation}
    \Sigma^{[l]}
    =
    \setlength{\arraycolsep}{0pt}
  \setlength{\delimitershortfall}{0pt}
  \begin{pmatrix}
    \,\fbox{$S^{[l]}$} & 0 & \ldots & 0  \,  \\
    \,0 & \fbox{$S^{[l]}$} & \ddots & \vdots \, \\
    \,\vdots  & \ddots & \ddots & 0 \, \\
    \,0 & \ldots & 0 & \fbox{$S^{[l]}$}\, \\
  \end{pmatrix} \;,
\end{equation}
and $\Omega^{[l]}$ is a block\hyp diagonal matrix:
\begin{equation}
    \Omega^{[l]}
    =
    \setlength{\arraycolsep}{0pt}
  \setlength{\delimitershortfall}{0pt}
  \begin{pmatrix}
    \,\fbox{$W^{[l]}$} & 0 & \ldots & 0  \,  \\
    \,0 & \fbox{$W^{[l]}$} & \ddots & \vdots \, \\
    \,\vdots  & \ddots & \ddots & 0 \, \\
    \,0 & \ldots & 0 & \fbox{$W^{[l]}$}\, \\
  \end{pmatrix} \;
\end{equation}
where $W^{[l]}$ was defined in the previous iteration, or $W^{[l]}=\begin{pmatrix} 1\end{pmatrix}$ if $l=0$.
Here, $U^{[l+1]}$ is left\hyp unitary and $W^{[l+1]}$ is right\hyp unitary and square, hence unitary. 
We express $M^{[l+1]}$ via its reduced QR decomposition: 
\begin{equation}
    M^{[l+1]} = Q^{[l+1]} R^{[l+1]}
    \label{eq:qr_M}
\end{equation}
where $Q^{[l+1]\dagger} Q^{[l+1]} = \hat{\mathbf 1}_{dm_l \times dm_l}$ and $R^{[l+1]}$ is upper triangular.
We then set
\begin{equation}
    X^{[l+1]}=\Sigma^{[l]}\Omega^{[l]} Q^{[l+1]}
\end{equation}
and perform its~\gls{SVD}:
\begin{equation}
    X^{[l+1]}= Y^{[l+1]} \Delta^{[l+1]} Z^{[l+1]}
    \label{eq:svd_X}
\end{equation}
where $Y^{[l+1]\dagger}Y^{[l+1]}=\hat{\mathbf 1}_{m_{l+1} \times m_{l+1}}$, $Z^{[l+1]}Z^{[l+1]\dagger}=\hat{\mathbf 1}_{m_{l+1} \times m_{l+1}}$ and $\Delta^{[l+1]}$ is diagonal.
We have:
\begin{align*}
    & \Sigma^{[l]}\Omega^{[l]}M^{[l+1]}M^{[l+1]\dagger}\Omega^{[l]\dagger}\Sigma^{[l]} \\
    =& Y^{[l+1]} \Delta^{[l+1]} Z^{[l+1]} R^{[l+1]} R^{[l+1]\dagger }Z^{[l+1]\dagger }\Delta^{[l+1]}Y^{[l+1]\dagger} \;.
\end{align*}
From~\cref{eq:svd:Sigma_Omega_B}, we have simultaneously that:
\begin{equation}
    \Sigma^{[l]}\Omega^{[l]}M^{[l+1]}M^{[l+1]\dagger}\Omega^{[l]\dagger}\Sigma^{[l]} =U^{[l+1]}(S^{[l+1]})^2 U^{[l+1]\dagger} \; .
\end{equation}
We fix $Y^{[l+1]}$, a gauge degree of freedom of the~\gls{MPS}, such that $Y^{[l+1]}=U^{[l+1]}$.
This yields:
\begin{equation}
    \Delta^{[l+1]} Z^{[l+1]} R^{[l+1]} R^{[l+1]\dagger }Z^{[l+1]\dagger }\Delta^{[l+1]} = (S^{[l+1]})^2 
\end{equation}
and then:
\begin{equation}
    R^{[l+1]} R^{[l+1]\dagger } = Z^{[l+1]\dagger }(\Delta^{[l+1]})^{-1}(S^{[l+1]})^2 (\Delta^{[l+1]})^{-1}Z^{[l+1]}\;.
\end{equation}
At this point, we draw $Q^{[l+1]}$ randomly according to the Haar measure, compute $X^{[l+1]}$, and deduce $\Delta^{[l+1]}$ and $Z^{[l+1]}$ via its~\gls{SVD} shown in~\cref{eq:svd_X}.
We could then use the Cholesky decomposition to determine $R^{[l+1]}$ and finally obtain $M^{[l+1]}$ via~\cref{eq:qr_M}. 
However, it turns out that an eigenvalue decomposition of the right\hyp hand side is more stable because we can enforce positive semidefiniteness, which otherwise may be violated at the order of the square root of the numerical precision.
Once $\{M^{\sigma_{l+1}}\}_{\sigma_{l+1}}$ are determined, it is easy to deduce $\{A^{\sigma_{l+1}}\}_{\sigma_{l+1}}$ via an~\gls{SVD} \cite{Schollwock_dmrg}, and by construction the diagonal entries of $(S^{[l+1]})^2$ are the singular values of $\operatorname{Tr}_{\sigma_{l+2},\ldots, \sigma_{L}}(\ket{\psi}\bra{\psi})$.
This procedure is iterated until $L$ site tensors are constructed.
However, care must be taken.
Due to the quantum marginal problem, the obtained set of site matrices does not provide a global representation of an element of the $\sigma$\hyp ensemble in general.
We elaborate on this in more detail in~\cref{annex:algorithm:sweeping}, where we also introduce a sweeping procedure.
This procedure uses the site matrices as preconditioned input for an iterative update scheme and yields the~\gls{MPS} representation which exhibits the desired sampled Schmidt values $S^{[l]}$ in each bipartition of the chain.
This construction is fully determined by the standard deviation $\sigma$ of the Gaussian distribution.
Additionally, we introduce a second control parameter: the maximum allowed bond dimension $\chi$.
This can be incorporated by truncating the drawn eigenvalues and retaining at most $\chi$ column vectors.
As a result, we can control the computational complexity and thereby the simulability of the ensemble from which the quantum states are drawn.
\section{Physical properties of the $\sigma$\hyp ensemble\label{sec:characterization}}
\begin{figure}[ht!]
    \centering
    \subfloat[\label{fig:von_neumann_entropy_3D_surface_plot_average_over_20_samples_until_16_gaussian_disk_log}]{
    \includegraphics[width=.95\linewidth]{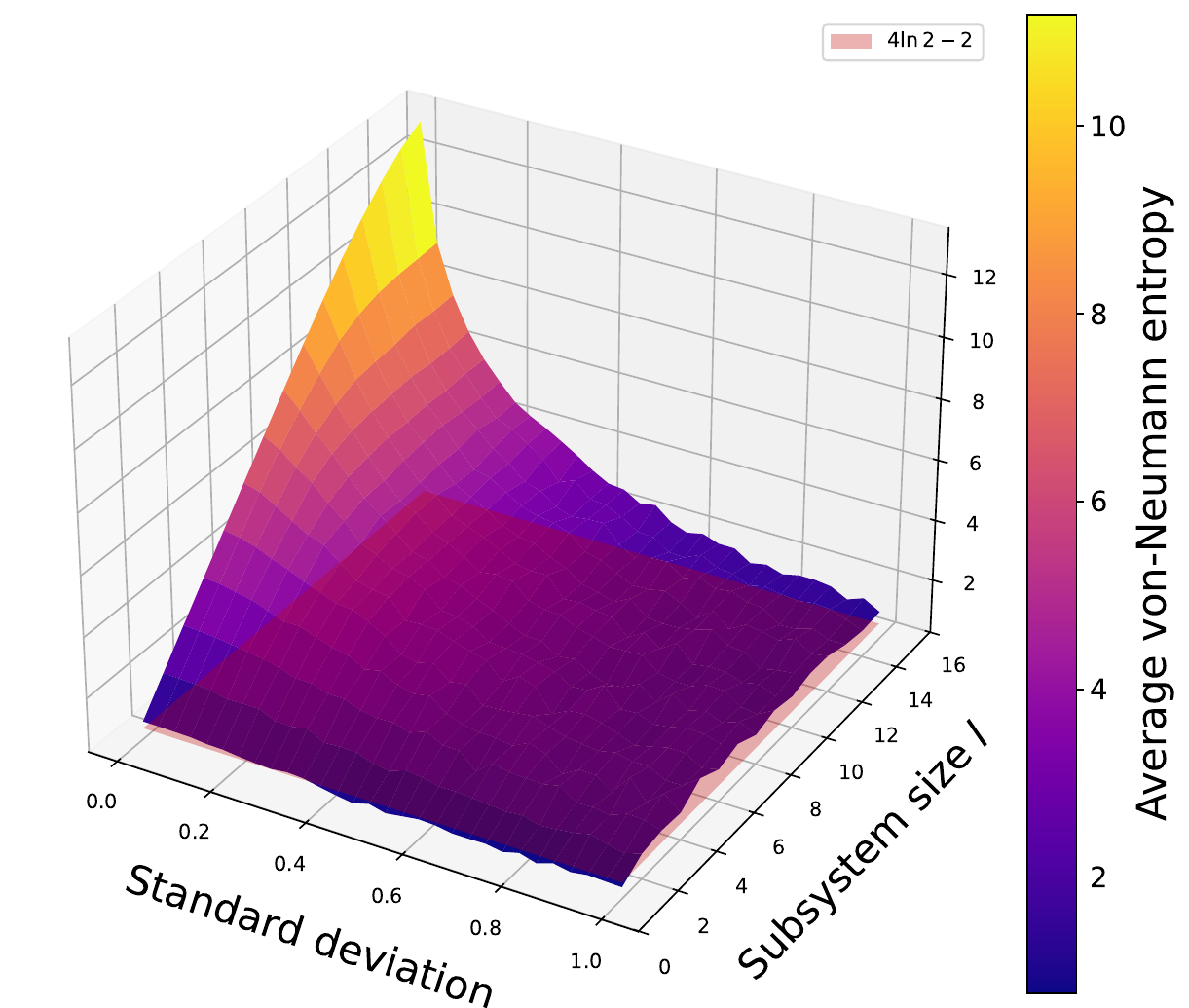}
    }\\
    \subfloat[\label{fig:bond_dimension_3D_surface_plot_average_over_10_samples_until_9_gaussian_disk_log}]{
        \includegraphics[width=.95\linewidth]{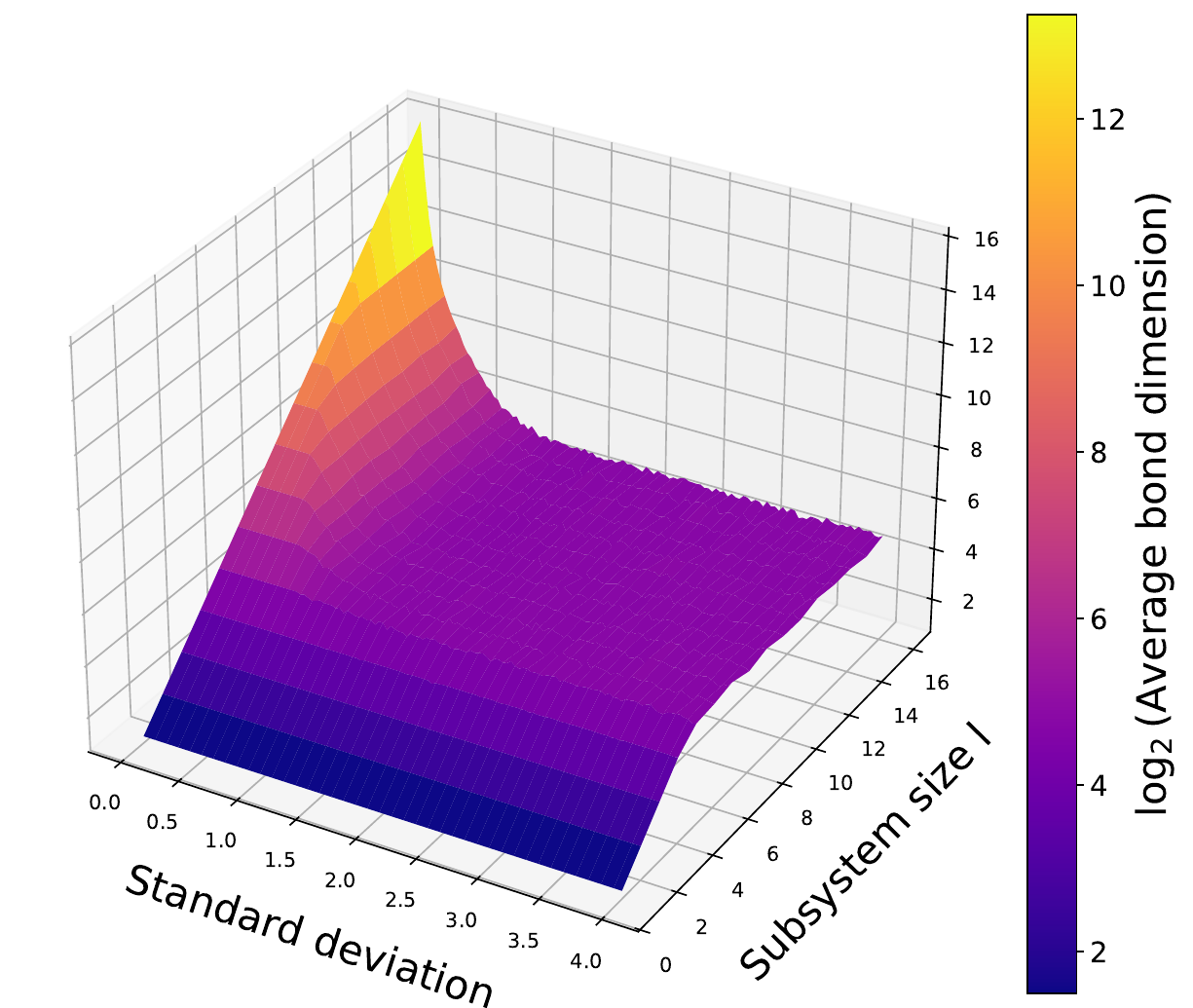}
    }
    \caption{
        \textbf{Mean von Neumann entropy~\subfigref{fig:von_neumann_entropy_3D_surface_plot_average_over_20_samples_until_16_gaussian_disk_log} and maximum bond dimension~\subfigref{fig:bond_dimension_3D_surface_plot_average_over_10_samples_until_9_gaussian_disk_log} of a random state obtained by sampling the eigenvalues from our Gaussian distribution, as a function of the subsystem size $l$ and the standard deviation $\sigma$.}
        The means were computed over $20$ samples, with a truncation threshold of $10^{-16}$ for the eigenvalues of the reduced density operator.
        Increasing the width $\sigma$ of the distribution, i.e., approaching uniform sampling of the eigenvalues from the unit sphere, reveals an area law, characterized by a subsystem\hyp size independent von Neumann entropy and a saturated maximum bond dimension.
        Conversely, the limit $\sigma \to 0$ yields a volume law, with an exponentially increasing maximum bond dimension.
    }
\end{figure}
As discussed previously, our method enables the generation of random pure quantum states with a controlled amount of entanglement.
In this section, we characterize the physical properties of the resulting ensemble of random states.
We first analyze the sampled eigenvalues to understand how tuning $\sigma$ interpolates between area\hyp law and volume\hyp law states, and how this affects the simulability, i.e., the computational complexity of representing and manipulating these states.
To that end, we focus on the bond dimension, which is a key quantity for tensor network simulation methods such as~\gls{MPS} since smaller bond dimensions allow for more efficient numerical representations and simulations.
Subsequently, we quantity the success of our algorithm in constructing a compatible global pure state from a given $\sigma$\hyp ensemble. 
Finally, we construct a phase diagram of the $\sigma$\hyp ensembles.
Note that in what follows, we always assume that the system is described by an underlying tensor\hyp product Hilbert space for which a one\hyp dimensional chain mapping has been chosen.
Furthermore, all analyses are performed for qubits with local Hilbert space dimension $d=2$.
\subsection{Entanglement entropy and simulability\label{sec:characterization:entanglement-and-simulability}}
From the general considerations in~\cref{sec:sampling:general}, we already know the two limiting cases of the eigenvalue distributions: (i) a maximally entangled state for $\sigma=0$, and (ii) an area\hyp law state with von Neumann entropy $\mathbb E[S_\infty] \sim 4\ln 2 - 2$ as $\sigma\to\infty$, where $S_\infty$ denotes the limit $\lim_{\lvert A \rvert \rightarrow \infty} S_A$.
To explore how these two limits are connected, we evaluate the mean von Neumann entropy of eigenvalue distributions tuning both the standard deviation $\sigma$ and the size $l$ of the smaller subsystem of the bipartition.
In~\cref{fig:von_neumann_entropy_3D_surface_plot_average_over_20_samples_until_16_gaussian_disk_log}, we show the resulting entanglement entropies for $\sigma\in(0,1]$.
For very small standard deviations $\sigma \approx 0$, the von Neumann entropy approaches its maximum value $S_n = \lvert l\rvert \log(2)$, clearly indicating a volume law.
Whereas as $\sigma$ increases, a rapid transition to an area\hyp law state occurs, evidenced by the saturation of the von Neumann entropy with increasing $l$.
Interestingly, for a fixed subsystem size $l$, the average von Neumann entropy decays towards its minimum value $\sim\mathbb E[S_\infty]$ with increasing $\sigma$.
This implies that, for any finite\hyp dimensional system, the length scale over which the sampled eigenvalue distributions generate volume\hyp law states decreases exponentially with increasing $\sigma$. 
Thus, in the infinite\hyp system limit, the corresponding quantum states are effectively sampled from the set of area\hyp law states. 
Moreover, we also plot the bond dimension as a function of the standard deviation and the system size in order to characterize the evolution of the simulability according to $\sigma$ and $n$.
Indeed, numerical methods such as the~\gls{DMRG} are known to be powerful tools for studying the low\hyp energy properties of one\hyp dimensional quantum systems, including spin chains, lattice models in condensed matter physics, and certain classes of strongly correlated systems~\cite{Schollwock_dmrg, Hallberg_2006}. 
However, their efficiency heavily depends on the bond dimension~\cite{Schollwock_2011}. 
With our method, for a Gaussian distribution centered around the maximally entangled state, the bond dimension, and thus the simulability, of our random states is determined by the standard deviation as can be seen in \cref{fig:bond_dimension_3D_surface_plot_average_over_10_samples_until_9_gaussian_disk_log}.
This highlights a key advantage of our sampling scheme: by appropriately choosing $\sigma$, we can generate random states with a desired bond dimension.
This figure also emphasizes that the range of $\sigma$'s for which the random states have a small bond dimension is far from marginal.
\subsection{Finding a compatible state \label{subsec:efficiency_algo}}
In this section, we evaluate the effectiveness of our method in reconstructing a compatible global pure state from a $\sigma$\hyp ensemble.
We recall that once the $L-1$ sets of eigenvalues are sampled, a compatible global pure state can be generated following the process described in~\cref{sec:sampling-scheme:states}. 
To quantify the performance of this method, we choose an error tolerance $\epsilon$ per state, and we compute the fraction of states whose error is smaller than~$\epsilon$.
We define this rate as the \emph{admission rate}:
\begin{equation}
    \mathcal{R}_{\epsilon} \equiv \lim_{N_\text{states}\to+\infty} \frac{1}{N_\text{states}} \sum_{s=1}^{N_\text{states}} \mathbf{1}(\text{error}((\vec \lambda_\text{target}, \vec \lambda_\text{GS})_s) < \epsilon) \; 
    \label{eq:admission_rate}
\end{equation}
where $\mathbf{1}(\cdot)$ denotes the indicator function.
The admission rate depends a priori on both the system size $L$ and the standard deviation $\sigma$ and is estimated numerically by sampling a large number of states $N_\text{states}$.
We choose $N_\text{states}=10,000$ for our numerical estimations.
The error of the constructed global pure state is quantified by comparing the target ensemble of eigenvalue sets $\vec \lambda^i_\text{target}$ with the ensemble of sets of eigenvalues of the $L-1$ bipartitions obtained with a single cut $\vec \lambda^i_\text{GS}$ from the global pure state:
\begin{equation}
    \text{error}((\vec \lambda_\text{target}, \vec \lambda_\text{GS}) )\equiv \frac{1}{L-1} \sum_{i=1}^{L-1} \rvert \rvert \vec \lambda^i_\text{target} -  \vec \lambda^i_\text{GS}  \lvert \lvert_2 \;.
\end{equation}
Here, $\vec \lambda^i_\text{target}$ denotes the vector formed from the set of eigenvalues sampled from a point on the $n$-sphere for the bipartition obtained with the cut between the site $i$ and $i+1$, and $\vec \lambda^i_\text{GS}$ is the corresponding vector formed from the set of eigenvalues from the global state constructed for the same bipartition.
This definition captures the average Euclidean (two\hyp norm) deviation between the target eigenvalues and those produced by the algorithm, averaged over all bipartitions.
In~\cref{fig:fraction_vs_L_for_different_chi}, we plot $\mathcal{R}_{10^{-4}}$ as a function of the number of lattices sites~$L$ for $\sigma=+\infty$. 
The results show that a finite fraction of samples achieves an acceptable error of $10^{-4}$. 
By increasing the number of samples, it seems therefore always possible to generate a random area\hyp law state for a given system size $L$.
At first glance, the exponential decay of the admission rate with $L$ may indicate that generating a random area\hyp law state requires running our algorithm $\mathcal{O}(e^{L})$ times. 
However, as we detail below, tuning the parameter $\sigma$ can significantly reduce the prefactor of this exponential decay.
In particular, there exists a regime in which one can efficiently sample random states that satisfy the area law.
\begin{figure}[H]
    \centering
    \begin{tikzpicture}
    \begin{axis}[
        xlabel={$L$},
        ylabel={$\mathcal{R}_{10^{-4}}$},
        xmin=8, xmax=14,
        ymin=0, ymax=0.025,
        xtick=data,
        grid=both,
        legend style={font=\small, xshift=-0.2cm, yshift=-0.2cm, anchor=north east},
        name=main plot, 
    ]

    \addplot[
    mark=*, 
    mark size=1.5pt, 
    color=green!60!black, 
    line width=1pt,   
    ] table [x index=0, y index=1] {Data/fraction_vs_L_data_for_different_chi.txt};

    \addlegendentry{$\chi_{\max} = 16$}

\addplot[
    mark=square*,
    mark size=1.5pt,
    color=orange!85!black,
    line width=1pt,
    mark options={
        draw=orange!85!black,            
        fill=orange!85!black              
    },
    ] 
    table [x index=0, y index=2] {Data/fraction_vs_L_data_for_different_chi.txt};
    \addlegendentry{$\chi_{\max} = 32$}

    \addplot[
    mark=triangle*, 
    mark size=1.5pt, 
    color=blue!70!black, 
    line width=1pt,   
    ] table [x index=0, y index=3]  {Data/fraction_vs_L_data_for_different_chi.txt};
    \addlegendentry{$\chi_{\max} = 64$}

    \end{axis}



\end{tikzpicture}
    
    \caption{\textbf{Admission rate $\mathcal{R}_{10^{-4}}$ as a function of system size $L$ in the case $\sigma = \infty$.} The admission rate is computed for bond dimensions $\chi_\text{max} = 16, 32, 64$ and over $10,000$ sampled states. 
    }
    \label{fig:fraction_vs_L_for_different_chi}
\end{figure}

\subsection{Different regimes according to $\sigma$ \label{subsec:different_regimes}}
In this section, we investigate the classification of the random states generated by our procedure -- namely, area\hyp law states and volume\hyp law states -- as a function of the standard deviation~$\sigma$.
Here, the distinction between area\hyp law and volume\hyp law states is based on the decay of their eigenvalues: area\hyp law states exhibit exponentially decaying eigenvalues, whereas volume\hyp law states are characterized by a nearly flat eigenvalue distribution.
Firstly, to determine the decay of the eigenvalues for a given $\sigma$, we proceed as follows.
For a set of $n$ eigenvalues, we compute the averaged ordered eigenvalues $\langle \lambda_i \rangle$, and plot them as a function of the index $i$ on a logarithmic scale, as previously shown in~\cref{fig:eigenvalues:uniform-sampling} and the insets of~\cref{fig:coefficient_of_determination}.
We then perform a linear regression on $\log(\langle \lambda_i \rangle)$ as a function of $i$, extracting both the slope $a$ of the fitted line and the corresponding coefficient of determination $R^2$.
For area\hyp law states, the exponential decay of $\langle \lambda_i \rangle$ is expected to result in a linear behavior in $\log(\langle \lambda_i \rangle)$, yielding a negative slope $a$ and a coefficient of determination $R^2$ close to $1$. 
Similarly, for volume\hyp law states, the nearly flat eigenvalue distribution is expected to produce a slope $a$ close to zero, again with a coefficient of determination $R^2$ close to $1$.
Critical states, lying between these two regimes, may exhibit lower values of $R^2$.
By repeating this procedure for different values of $\sigma$, we obtain $a$ and $R^2$ as a function of $\sigma$ for a fixed subsystem dimension $n$.
\Cref{fig:coefficient_of_determination} shows $a$ and $R^2$ as a function of $\sigma$ for $n=2^6$.
The results confirm the expected behavior.
As $\sigma$ goes to zero, the slope $a$ tends toward zero while the coefficient of determination remains close to~$1$, indicating a flat distribution of the eigenvalues characterizing volume\hyp law states. 
For larger values of $\sigma$, the slope $a$ becomes increasingly negative, signalling an exponential decay of the eigenvalues and thus, the onset of the area\hyp law behavior.
Moreover, the coefficient of determination~$R^2$ increases with~$\sigma$, reflecting the improving quality of the exponential fit.
Between those two regimes, the coefficient of determination reaches a minimum at $\sigma_\text{critical}=0.0819$, where $R^2 =0.8355$. 
For a fixed value of $n$, we define $\sigma_\text{critical}$ as
\begin{equation}
    \sigma_\text{critical}=\arg \min_{\sigma}R^2
\end{equation}
and interpret it as the boundary between the volume\hyp law and area\hyp law regimes. 
To visualize how these regimes depend jointly on~$n$ and~$\sigma$, we repeat this analysis for different values of~$n$ and determine the corresponding $\sigma_{\text{critical}}$.
The resulting phase diagram is shown in \cref{fig:Phase_diagram}, where the solid line separates the region generating volume\hyp law states from the region generating area\hyp law states.
\begin{figure*}[htbp]
    \centering
    \begin{tikzpicture}
    \node[anchor=south west,inner sep=0] (image) at (0,0) {
        \begin{tikzpicture}
\pgfplotstableread{
Data/Coeff_determination_more_precised_n64_sigmalimit_0.0001_withtrucationchanging_corrected.txt
}\loadedtable

\begin{axis}[
    width=\textwidth-2cm,
    height=10cm,
    xlabel={Standard deviation $\sigma$},
    ylabel={Slope $a$ of $\bar \lambda_i(\sigma)=a \times i + b$},
    scaled ticks=false,
    yticklabel style={
    /pgf/number format/fixed, 
    /pgf/number format/precision=3, 
    inner sep=3pt,
    },
    xticklabel style={
    /pgf/number format/fixed, 
    /pgf/number format/precision=4, 
    inner sep=3pt,
    },
    grid=both,
    colorbar,
    colorbar style={
    ylabel={Goodness of the fit $R^2$},
    ylabel style={font=\small},
    xlabel style={font=\small},
    yticklabel style={font=\small, inner sep=3pt,},
    xticklabel style={font=\small}},
    colormap/viridis,
    point meta min=0.84,
    point meta max=1.0,
    scatter/use mapped color={
        draw=mapped color,
        fill=mapped color!100,
    },
]

\addplot[
    scatter,
    only marks,
    mark=*,
    mark size=3pt,
    point meta=explicit,
    scatter/use mapped color={
        fill=mapped color,
        draw=black
    },
]
table[
    x index=0,
    y index=1,
    meta index=2,
] {Data/Coeff_determination_more_precised_n64_sigmalimit_0.0001_withtrucationchanging_corrected.txt};
\end{axis}
\end{tikzpicture}
    };

    \begin{scope}[x={(image.south east)}, y={(image.north west)}]

        \node[anchor=south west, inner sep=0] at (0.495,0.55)
        {
        \begin{tikzpicture}
\begin{axis}[
    width=7cm,
    height=4.66cm,
    axis background/.style={fill=white}, 
    xlabel style={opacity=1}, 
    ylabel style={opacity=1}, 
    tick label style={opacity=1}, 
    legend style={draw=none, fill=none}, 
    xlabel={Index $i$},
    ylabel={$\bar\lambda_i (\sigma)$},
    ymode=log,
    xticklabel style={font=\small, inner sep=2pt},
    yticklabel style={font=\small, inner sep=2pt},
    xlabel style={font=\small},
    ylabel style={font=\small},
    legend pos=north east, 
    legend style={font=\small},
]
\addplot[
    mark=*,
    mark size=0.5pt,
    thick,
    color=blue!70!black,
]
table[
    x index=0,
    y index=1,
] {Data/lambda_plot_gaussian_n64_standard_deviation0.3logscale_reglin.txt};

\addlegendentry{$\bar\lambda_i (\sigma=0.3)$}

\addplot[
    red,
    dashed,
    thick,
    domain=1:64,   
    samples=200,
]
{exp(-0.147*x - 2.019)};

\addlegendentry{Linear reg.} 

\end{axis}
\end{tikzpicture}
        };
        
        \node[anchor=south west, inner sep=0] at (0.086,0.08)
        {
        \begin{tikzpicture}
\begin{axis}[
    width=7cm,
    height=4.66cm,
    axis background/.style={fill=white}, 
    xlabel style={opacity=1}, 
    ylabel style={opacity=1}, 
    tick label style={opacity=1}, 
    legend style={draw=none, fill=none}, 
    xlabel={Index $i$},
    ylabel={$\bar\lambda_i (\sigma)$},
    xticklabel style={font=\small, inner sep=2pt},
    yticklabel style={font=\small, inner sep=2pt},
    xlabel style={font=\small},
    ylabel style={font=\small},
    legend pos=north east, 
    legend style={font=\small},
]

\addplot[
    mark=*,
    mark size=0.5pt,
    thick,
    color=blue!70!black,
]
table[
    x index=0,
    y index=1,
] {Data/lambda_plot_gaussian_n64_standard_deviation0.0001logscale_reglin.txt};

\addlegendentry{$\bar\lambda_i (\sigma=0.0001)$}

\addplot[
    red,
    dashed,
    thick,
    domain=1:64,   
    samples=200,
]
{exp(-6*10^(-5)*x - 4.157)}
;

\addlegendentry{Linear reg.} 

\end{axis}
\end{tikzpicture}
        };
        \draw[black, thick] (0.13,0.935) rectangle (0.17,0.875);
        \draw[black, thick] (0.8,0.18) rectangle (0.84,0.12);

        \draw[->, thick] (0.15,0.875) -- (0.27,0.48); 
        \draw[->, thick] (0.82,0.18) -- (0.7,0.53);

    \end{scope}
\end{tikzpicture}
    \caption{\textbf{Slope $a$ and coefficient of determination $R^2$ of the linear regression $\log(\langle \lambda_i\rangle )=a\times i+b$ as a function of the standard deviation $\sigma$.} 
    Here, $n=2^{6}$. 
    The worst fit is obtained at $\sigma_\text{critical}=0.0819$, where $R^2=0.8355$. 
    The insets show $\langle \lambda_i \rangle$ as a function of the index $i$ for $\sigma=0.0001$ (bottom left) and $\sigma=0.3$ (top right). 
    For $\sigma=0.0001$, the eigenvalue distribution is nearly flat, this is characteristic of volume\hyp law states. The linear regression is $y=-6 .10^{-5}\times i-4.157$.  
    While for $\sigma=0.3$, the eigenvalues exhibit an exponential decay, characteristic of area\hyp law states. The linear regression is $y=-0.147 \times i-2.019$.}
    \label{fig:coefficient_of_determination}
\end{figure*}

%

\begin{figure}[]
    \centering
    \begin{tikzpicture}
    \begin{axis}[
        xlabel={$n$},
        ylabel={$\sigma_\text{critical}$},
        xlabel style={},
        ylabel style={},
        xticklabel style={},
        yticklabel style={},
        legend pos=north east,
        legend style={},
        every axis plot/.append style={line width=1pt},
        clip=false, 
        scaled ticks=false, 
        yticklabel style={
        /pgf/number format/fixed, 
        /pgf/number format/precision=4, 
        },
    ]

    \addplot[
    mark=*, 
    mark size=1pt, 
    color=blue!70!black, 
   ] table [x index=0, y index=1] {Data/critical_standard_deviation.txt};

    \node[anchor=south west] at (rel axis cs:0.3, 0.2) {Area law};

    \node[anchor=south west] at (rel axis cs:0.01, 0.03) {Volume law};

    \end{axis}
    \begin{axis}[
        at={(main plot.north east)}, 
        anchor=north east, 
        xshift=-0.3cm, yshift=-0.3cm, 
        width=6cm, 
        height=4cm, 
        xmin=0, xmax=0.1, 
        ymin=0, ymax=1, 
        xlabel={$\sigma$}, 
        ylabel={$\mathcal{R}_{10^{-3}}$}, 
        grid=both, 
        scaled ticks=false,
        xticklabel style={
                /pgf/number format/fixed,
                /pgf/number format/precision=4,
            },
        tick label style={font=\tiny}, 
        xlabel style={font=\tiny}, 
        ylabel style={font=\tiny}, 
        axis background/.style={fill=white}, 
    ]

    \addplot[
        mark=triangle*, 
        mark size=1.5pt, 
        color=blue!70!black, 
        line width=1pt,
    ] table [x index=0, y index=1] {Data/fraction_vs_sigma.txt};

    \end{axis}

\end{tikzpicture}
    \caption{\textbf{Phase diagram of the different regimes as a function of the standard deviation $\sigma$ and the subsystem dimension $\mathbf{n}$.} 
    The solid line represents $\sigma_\text{critical}$, separating the regime generating area\hyp law states (above the line) from the regime generating volume\hyp law states (below the line).  
    \textbf{(Inset)} Admission rate $\mathcal{R}_{10^{-3}}$ as a function of the standard deviation $\sigma$ for a fixed system size $L=12$ and bond dimension $\chi_\text{max} = 64$, computed over $10,000$ states.}
    \label{fig:Phase_diagram}
\end{figure}
Secondly, we compute the success rate of our procedure in constructing a compatible global pure state from the set of eigenvalues as a function of $\sigma$ in order to understand how its success rate evolves across these different regimes. 
To this end, we study the admission rate defined in~\cref{eq:admission_rate} as a function of $\sigma$ for a system of $L$ lattice sites and put it into perspective with the phase diagram.
Although $\sigma_{\text{critical}}$ depends on the subsystem dimension~$n$, this dependence weakens for sufficiently large values of~$\sigma$ and subsystem sizes larger than six (see \cref{fig:bond_dimension_3D_surface_plot_average_over_10_samples_until_9_gaussian_disk_log}).
For simplicity, we therefore use a fixed value of~$\sigma$ for all sets of eigenvalues associated with a given state.
The inset of \cref{fig:Phase_diagram} shows the admission rate $\mathcal{R}_{10^{-3}}$ as a function of $\sigma$ for $L=12$.
The results reveal that the algorithm more frequently finds compatible global states for small values of $\sigma$, corresponding to the volume\hyp law regime. 
This behavior is expected, since area\hyp law states occupy a Haar measure zero subset of the Hilbert space~\cite{Page_1993}, making them intrinsically harder to generate.
Thus, these findings are consistent with the phase diagram. 
For small~$\sigma$, the eigenvalue distributions are nearly flat and the algorithm performs efficiently, whereas for large~$\sigma$, the eigenvalues decay exponentially and the construction of compatible global states becomes computationally more demanding.
Nevertheless, our algorithm remains capable of generating such area\hyp law states.
\section{Discussion}
In this article, we introduced ensembles of random pure quantum states -- the $\sigma$\hyp ensembles -- and established some of their key properties. 
While the Haar measure already provides a well\hyp known method for generating random states, Haar-random states obey a volume law for entanglement entropy \cite{Page_1993}. 
This makes them unsuitable for classical simulation and unrepresentative of the quantum states typically encountered in quantum many\hyp body systems, which generally exhibit area\hyp law entanglement scaling.
Here, we proposed constraining the entanglement of the subsystems by enforcing a specific probability distribution on their eigenvalues.
We then presented a method for constructing a compatible global pure state using the~\gls{MPS} formalism. 
Our results demonstrate that the class of random states can be tuned between area\hyp law and volume\hyp law behaviors through a judicious choice of the width of the eigenvalue distribution around the maximally entangled point.
At first consideration, the exponential decay of the admission rate with $L$ must have appeared problematic.
However, an analysis of the admission rate as a function of the width of the eigenvalue distribution reveals that the prefactor of this exponential decay can be significantly reduced.
Consequently, there exists a regime in which random area\hyp law states can be efficiently sampled.
Our findings extend those of Collins et al. which show that, for random graph states, the area law holds generally for some special cases and asymptotically for the general case \cite{Collins_2010, Collins_2013}.
However, unlike Haar-random states and our $\sigma$\hyp ensembles, the topology of the couplings between the physical subsystem of the random graph states is dictated by the underlying graph topology.
Most notably, this work presents, to our knowledge, the first method capable of generating random quantum states obeying the area law, beyond the specific case of random graph states.
The key achievement of our method is its ability to produce random quantum states with a controlled entanglement growth throughout the bipartitions.
In the context of classical simulation of quantum systems, this avoids the intractability issues associated with Haar-random pure states.
Furthermore, our random quantum states may also be more relevant than volume\hyp law random quantum states for testing, benchmarking, validating, and evaluating the performance of quantum algorithms, given the physical relevance of area-law states \cite{Hastings_2007, Wolf_2008, Aharonov_2011, Kull_2019, Shor_1997, Grover_1996, Boixo2014}.
Our area\hyp law random quantum states lack the unitary invariance that characterizes the Haar measure.
However, we believe that imposing a different probability distribution on the eigenvalues of successive subsystems could yield a unitary\hyp invariant distribution of area\hyp law states. 
Such a distribution would extend the Haar measure to the space of area\hyp law states, which has zero measure under the Haar distribution.
We also believe that our method could enable the generation of critical states. 
Indeed, while we provide a phase diagram of the area\hyp law and volume\hyp law regimes as a function of system dimensionality $n$ and the width of the Gaussian distribution around the maximally entangled point on the $n$\hyp sphere, critical states are expected to exist at the boundary between these regimes. 
Future work should explore alternative eigenvalue distributions and map the full class of random states as a function of their distribution on the $n$\hyp sphere.
Finally, establishing additional physical properties of the $\sigma$\hyp ensembles remains a subject of further investigation.

\section*{Acknowledgments}
The authors acknowledge the Munich Quantum Valley, funded by the Free State of Bavaria. SP also acknowledges support by the Deutsche Forschungsgemeinschaft (DFG, German Research Foundation) under Germany’s Excellence Strategy-426 EXC-2111-39081486.
\newpage
\bibliographystyle{unsrt}
\bibliography{bibliography}

\clearpage
\onecolumngrid
\appendix
\section{Generating a set of eigenvalues for a uniform sampling on the unit sphere \label{annex:generating_the_eigenvalues_uniform}}

To generate random eigenvalues, we sample a random point on the positive orthant of the unit $n$-sphere and use the square of its Cartesian coordinates as the eigenvalues. 
This ensures that the eigenvalues are non\hyp negative and sum to one. 
Restricting the sampling to the positive orthant does not alter the distribution of the eigenvalues, since the eigenvalues are the square of the coordinates.
This restriction simply eliminates duplicate sets of eigenvalues.
In the appendices, we use bold capitalized letters to denote random variables and lowercase, non\hyp bold letters to refer to their values.
To generate a random point uniformly on the positive orthant of the unit $n$-sphere, we sample $n-1$ random variables $\bm{\varPhi_1}, \ldots , \bm{\varPhi_{n-1}}$, where $\bm{\varPhi_1}, \ldots , \bm{\varPhi_{n-2}}$ are uniformly distributed on $[0,\pi/2]$ and $\bm{\varPhi_{n-1}}$ is uniformly distributed on $[0,\pi/2[$. 
These variables represent the spherical coordinates of a point on the $n$-sphere. 
We convert these spherical coordinates to Cartesian coordinates as follows:
\begin{align*}
    \bm{X_1} &= \cos(\bm{\varPhi_1}), \label{equ:coordinate_1}  \\
    \bm{X_{2}}&=\sin(\bm{\varPhi_{1}})\cos(\bm{\varPhi_{2}}),\\
    \bm{X_{3}}&=\sin(\bm{\varPhi_{1}})\sin(\bm{\varPhi_{2}})\cos(\bm{\varPhi_{3}}),\\
    & \qquad \vdots \\
    \bm{X_{n-1}}&=\sin(\bm{\varPhi_{1}})\ldots \sin(\bm{\varPhi_{n-2}})\cos(\bm{\varPhi_{n-1}}),\\
    \bm{X_{n}}&=\sin(\bm{\varPhi_{1}})\ldots \sin(\bm{\varPhi_{n-2}})\sin(\bm{\varPhi_{n-1}}).
\end{align*}
Finally, the eigenvalues are obtained by squaring these Cartesian coordinates:
\begin{equation*}
    \bm{\Lambda_i} = \bm{X_i}^2 \qquad \forall~ 1 \leq i \leq n \;.
\end{equation*}
\section{Expectation values of the eigenvalues for a uniform sampling on the unit sphere \label{annex:computation_of_exponential_decay}}
In this appendix, we demonstrate the exponential decay of the expectation values of the non-ordered eigenvalues $\bm{\Lambda_1}, \ldots, \bm{\Lambda_n}$. 
For $i \in \llbracket 1,n-1  \rrbracket$, the expectation value of $\bm{\Lambda_1}$ is:
\begin{equation*}
      \mathbb{E}[\bm{\Lambda_i}] = \mathbb{E}[\bm{X_i}^2] 
\end{equation*}
Using the relationship between Cartesian and spherical coordinates, we have:
\begin{equation*}
  \mathbb{E}[\bm{\Lambda_i}] = \mathbb{E}[\cos(\bm{\varPhi_i})^2\prod_{j=1}^{i-1} \sin(\bm{\varPhi_j})^2] 
\end{equation*}
By the independence of the random variables $\bm{\varPhi_1}, \ldots , \bm{\varPhi_{n-1}}$, this simplifies to:
\begin{equation*}
 \mathbb{E}[\bm{\Lambda_i}] = \mathbb{E}[\cos(\bm{\varPhi_i})^2]\prod_{j=1}^{i-1} \mathbb{E}[\sin(\bm{\varPhi_j})^2] .
\end{equation*}
For $i=n$, the expectation value is:
\begin{equation*}
    \mathbb{E}[\bm{\Lambda_n}] = \prod_{j=1}^{n-1} \mathbb{E}[\sin(\bm{\varPhi_j})^2] \;.
\end{equation*}
For $1 \leq j \leq n-1$, the~\gls{PDF} of $\bm{\varPhi_j}$ is 
\begin{equation*}
p(\bm{\varPhi_j} = \varphi)=
    \begin{cases} \frac{2}{\pi}  & \text{if } \varphi \in [0,\pi/2], \\
    0 & \text{otherwise.} 
    \end{cases} 
\end{equation*}
Using the identity \(\cos^2(x) = \frac{1 + \cos(2x)}{2}\), we compute:

\begin{align*}
\mathbb{E}[\cos^2(\bm{\varPhi})] &= \int_0^{\frac{\pi}{2}} \cos^2(\varphi) \cdot \frac{2}{\pi} \, d\varphi \\
&= \frac{2}{\pi} \int_0^{\frac{\pi}{2}} \frac{1 + \cos(2\varphi)}{2} \, d\varphi \\
&= \frac{1}{2}\;.
\end{align*}
\noindent From this, we compute:
\begin{equation*}
    \mathbb{E}[\sin^2(\bm{\varPhi})] = 1 - \mathbb{E}[\cos^2(\bm{\varPhi})] = \frac{1}{2}\;.
\end{equation*}
For $i \leq n-1$, the expectation value of the $i$-th eigenvalue is
\begin{equation*}
    \mathbb{E}[\bm{\Lambda_i}] = \left( \frac{1}{2} \right) \cdot \left( \frac{1}{2} \right)^{i-1} = \left( \frac{1}{2} \right)^i \; ,
\end{equation*}
and for $i = n$,
\begin{equation*}
    \mathbb{E}[\bm{\Lambda_n}] = \left( \frac{1}{2} \right)^{n-1} \;.
\end{equation*}
This demonstrates that the eigenvalues, obtained by sampling the coordinates uniformly on the positive orthant of the unit $n$-sphere and squaring them, decay exponentially with $i$. 
Thus, the generated quantum states obey the area law.
\section{Expectation value of the von Neumann entropy of the states for a uniform sampling on the unit sphere \label{appendix:ananlytical_proof_average_von_neumann_entropy}}

In this appendix, we compute the expectation value of the von Neumann entropy of the states produced by the distribution defined in~\cref{sec:sampling:general} in the case where the points generating the eigenvalues are uniformly sampled from the positive orthant of the unit $n$-sphere. To do so, we first derive the~\gls{PDF} of the non\hyp ordered eigenvalues obtained by uniformly sampling points on the positive orthant of the unit $n$-sphere, and then calculate the von Neumann entropy.

\subsection{\Gls{PDF} of the eigenvalues
\label{annex:pdf_eigenvalues}}
The Cartesian coordinates $(x_1,x_2,\ldots, x_n)$ of a point on the unit $n$-sphere are related to its spherical coordinates as follows~\cite{n_dim_sphere}:
\begin{align*}
    x_1 &= \cos(\varphi_1),  \\
    x_{2}&=\sin(\varphi _{1})\cos(\varphi_{2}),\\
    x_{3}&=\sin(\varphi _{1})\sin(\varphi _{2})\cos(\varphi _{3}),\\
    & \qquad \vdots \\
    x_{n-1}&=\sin(\varphi _{1})\ldots \sin(\varphi _{n-2})\cos(\varphi _{n-1}),\\
    x_{n}&=\sin(\varphi_{1})\ldots \sin(\varphi_{n-2})\sin(\varphi_{n-1}).
\end{align*}
where $\varphi_1,\varphi_2, \ldots, \varphi_{n-2}  \in [0,\pi ]$ and $\varphi_{n-1} \in [0,2\pi[$. 
The inverse transformation is given by:
\begin{equation}
\begin{split}
    \varphi_1 &= \operatorname{atan2}(\sqrt{x_n^2+ x^2_{n-1} + \ldots + x_2^2}, x_1)  \\
    \varphi_{2}&= \operatorname{atan2}(\sqrt{x_n^2 + x^2_{n-1}+ \ldots + x_3^2}, x_2)\\
    & \qquad \vdots \\
    \varphi_{n-2}&= \operatorname{atan2}(\sqrt{x_n^2 + x_{n-1}^2}, x_{n-2})\\
    \varphi_{n-1}&=\operatorname{atan2}(x_n , x_{n-1})=\operatorname{atan}(1-\sum_{i=1}^{n-1}x_i, x_{n-1}).
\label{eq:cartesian_to_spherical_coordinates}
\end{split}
\end{equation}
Here, $\operatorname{atan2}$ denotes the two-argument arctangent function, defined by:
\begin{equation*}
    \operatorname{atan2}(y, x) =
\begin{cases} 
    \arctan(\frac{y}{x}), & \text{if } x > 0, \\
    \arctan(\frac{y}{x})+\pi, & \text{if } x < 0 \text{ and } y \geq 0, \\
    \arctan(\frac{y}{x})-\pi, & \text{if } x < 0 \text{ and } y < 0, \\
    +\frac{\pi}{2}, & \text{if } x = 0 \text{ and } y > 0, \\
    -\frac{\pi}{2}, & \text{if } x = 0 \text{ and } y < 0, \\
    \text{undefined} & \text{if } x = 0 \text{ and } y = 0.
\end{cases}
\end{equation*}
This inverse transform is not unique in some special cases. 
For instance, if $x_k = \ldots = x_n  =0 $, then the angles $\varphi_{k+1}, \ldots, \varphi_{n-1} $ are undefined. 
In such cases, these angles may be set to zero to resolve the ambiguity. 
To derive the\gls{PDF} of the eigenvalues, we proceed via two successive transformations.
First, we express the transformation from spherical coordinates to Cartesian coordinates $\vec{G}$:
\begin{align*}
    \centering
    \begin{split}
    G_1(\bm{\varPhi_1}, \ldots, \bm{\varPhi_{n-1}})  = & \cos \bm{\varPhi_1} = \bm{X_1} \\
    G_2(\bm{\varPhi_1}, \ldots, \bm{\varPhi_{n-1}}) = & \sin \bm{\varPhi_1} \cos \bm{\varPhi_2} = \bm{X_2} \\
    & \vdots \\
    G_{n-1}(\bm{\varPhi_1}, \ldots, \bm{\varPhi_{n-1}}) = & \sin \bm{\varPhi_1} \sin \bm{\varPhi_2} \ldots \sin \bm{\varPhi_{n-2}} \cos \bm{\varPhi_{n-1}} =  \bm{X_{n-1}} \;.
    \end{split}
    \label{eq:expression_of_the_transformation_G}
\end{align*}
The remaining coordinate $\bm{X_n}$ is uniquely determined (up to a sign) by the constraint $\sum_{i=1}^n x_i^2 = 1$.
Restricting to the positive orthant fixes this sign, so $\bm{X_n}$ is completely determined by $x_1, \ldots, x_{n-1}$.
Accordingly, we exclude $\bm{X_n}$ from the transformation to ensure equal dimensions between the Cartesian and spherical coordinate systems.
We adopt the shorthand notation $p(x) \equiv p(\bm{X} = x)$ when no ambiguity arises.
According to \cite{devore2007modern}, the~\gls{PDF} of $\vec{G}$ can be evaluated if $\vec{G}$ is a bijective differentiable function:
\begin{equation}
    p(x_1, \ldots, x_{n-1}) = p(\vec{G}^{-1}(x_1, \ldots, x_{n-1})) \Bigg|\det{\frac{d\vec{G}^{-1}}{d\vec{z}}\Bigg|_{\vec{z}=(x_1, \ldots, x_{n-1}) }}\Bigg|\;,
\label{eq:vector_to_vector_change_variable_density_fct}
\end{equation}
where $p(\vec{G}^{-1}(x_1, \ldots, x_{n-1}))$ is the joint probability of the random vector $(\varphi_1, \ldots , \varphi_{n-1})$ and the differential is the Jacobian of the inverse of $\vec{G}$, evaluated at $(x_1, \ldots, x_{n-1})$. 
The Jacobian matrix of $\Vec{G}$ is given by:
\begin{align}
J_n &  \equiv \frac{d\vec{G}^{-1}}{d\vec{z}}\Bigg|_{\vec{z}=(x_1, \ldots, x_{n-1})} \nonumber \\
& = \begin{bmatrix}
-\sin(\varphi_1) & 0 & \ldots & 0 \\
\cos(\varphi_1)\cos(\varphi_2) & -\sin(\varphi_1)\sin(\varphi_2) & \ddots & \vdots \\
\vdots  &  & \ddots & 0  \\
\cos(\varphi_1) \sin(\varphi_2)\ldots \sin(\varphi_{n-2})\cos(\varphi_{n-1})& \ldots & & -\sin(\varphi_1)\ldots \sin(\varphi_{n-1})
\end{bmatrix}. \label{eq:jacobian_matrix_of_G}
\end{align}
When restricted to the domain
\begin{equation}
\vec{G} :\; ]0,\frac{\pi}{2}[^{n-1} \longrightarrow \mathcal{S}^n_+,
\label{eq:G_bijective}
\end{equation}
where
\begin{equation}
\mathcal{S}^n_+ =
\{ x \in \mathbb{R}^n \mid \|x\|_2 = 1,\; x_i > 0 \; \forall i \in \llbracket 0, n+1 \rrbracket \},
\label{eq:definition_of_S_tilde_n}
\end{equation}
the mapping $\vec{G}$ is bijective and continuously differentiable.
\begin{lemma}
For the $n$-sphere: 
\begin{equation*}
    \left| \det{\frac{\partial \vec{G}^{-1}(\vec{z}) }{\partial \vec{z}} \Big|_{\vec{z}=(x_1, \ldots,x_{n-1})} } \right| = \prod_{k=1}^{n-1} \frac{1}{\sqrt{1-\sum_{i=1}^k x_i^2}} \;.
\end{equation*}
\end{lemma}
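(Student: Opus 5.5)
The plan is to compute the Jacobian of $\vec G$ itself, which is lower triangular, and then use the inverse function theorem. By that theorem, $\det \frac{\partial \vec G^{-1}}{\partial \vec z} = 1/\det \frac{\partial \vec G}{\partial \vec\varphi}$, evaluated at $\vec\varphi=\vec G^{-1}(\vec z)$. This is legitimate because $\vec G$ is a $C^1$ bijection on $]0,\pi/2[^{n-1}$ (as stated in \cref{eq:G_bijective}), and, as we will see, its Jacobian does not vanish there.

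\textbf{Step 1: triangular structure.} Since $G_k$ depends only on $\varphi_1,\ldots,\varphi_k$, the matrix $\partial G_k/\partial\varphi_j$ vanishes for $j>k$, as displayed in \cref{eq:jacobian_matrix_of_G}. The diagonal entries are
\begin{equation*}
\frac{\partial G_k}{\partial \varphi_k} = -\sin\varphi_1\cdots\sin\varphi_{k-1}\sin\varphi_k .
\end{equation*}
The determinant is therefore
\begin{equation*}
\prod_{k=1}^{n-1}\prod_{j=1}^{k}\sin\varphi_j ,
\end{equation*}
up to the sign $(-1)^{n-1}$, and it is nonzero on the open domain.

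\textbf{Step 2: express each factor in Cartesian variables.} The key identity is
\begin{equation*}
\prod_{j=1}^{k}\sin^2\varphi_j = 1-\sum_{i=1}^{k}x_i^2 ,
\end{equation*}
which I prove by induction on $k$. For $k=1$ it reads $\sin^2\varphi_1 = 1-\cos^2\varphi_1 = 1-x_1^2$. For the inductive step, write $P_k=\prod_{j\le k}\sin^2\varphi_j$. Then
\begin{equation*}
P_k = P_{k-1}\,(1-\cos^2\varphi_k) = P_{k-1}-x_k^2 ,
\end{equation*}
because $x_k^2 = P_{k-1}\cos^2\varphi_k$. Combining this with the induction hypothesis gives the identity. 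On the positive orthant all sines are positive, so $\prod_{j\le k}\sin\varphi_j = \sqrt{1-\sum_{i\le k}x_i^2}$ with the positive root.

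\textbf{Step 3: combine.} Taking absolute values and inverting gives
\begin{equation*}
\left|\det \frac{\partial \vec G^{-1}}{\partial \vec z}\right| = \prod_{k=1}^{n-1}\left(1-\sum_{i=1}^k x_i^2\right)^{-1/2},
\end{equation*}
which is the claim.

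The main obstacle is bookkeeping rather than anything conceptual. The matrix displayed in \cref{eq:jacobian_matrix_of_G} is in fact the Jacobian of $\vec G$, not of $\vec G^{-1}$, so the determinant has to be inverted. One also has to keep the sign conventions and positivity of the square roots straight; these are guaranteed by restricting to $]0,\pi/2[^{n-1}$.
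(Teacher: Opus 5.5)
Your proof is correct, and it takes a different route from the paper's. The paper works with the explicit inverse $\varphi_k=\arctan\bigl(\sqrt{1-\sum_{i\le k}x_i^2}/x_k\bigr)$ and differentiates it directly. Because $\varphi_k$ depends only on $x_1,\ldots,x_k$, the Jacobian of $\vec G^{-1}$ is lower triangular. An induction on $n$ then adds one diagonal entry $\partial\varphi_n/\partial x_n=-1/\sqrt{1-\sum_{i\le n}x_i^2}$ at a time, so the Cartesian form appears immediately and no inversion is needed.

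You differentiate the forward map instead, which needs only elementary trigonometric derivatives. You pay for this with two extra ingredients. The first is the telescoping identity $\prod_{j\le k}\sin^2\varphi_j=1-\sum_{i\le k}x_i^2$, which brings you back to Cartesian variables; in the paper the same geometric fact is hidden inside the formula for $\vec G^{-1}$. The second is the inverse function theorem, to take the reciprocal, for which you correctly check that the Jacobian of $\vec G$ never vanishes on $]0,\pi/2[^{n-1}$.

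What your version gains is that every step is explicit: the paper asserts the value of the diagonal entry $\alpha$ without carrying out the derivative of the arctan. Your version is also consistent with the matrix displayed in \cref{eq:jacobian_matrix_of_G}. As you rightly point out, that matrix is the Jacobian of $\vec G$, not of $\vec G^{-1}$ as its label suggests.
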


\begin{proof} The inverse transformation of $\vec{G}$ is given by:
\begin{align*}
    \vec{G}^{-1}:\tilde{S}^n & \longrightarrow ]0,\frac{\pi}{2}[^{n-1}\\
    (x_1, \ldots, x_{n-1}) & \mapsto \begin{pmatrix}
\varphi_1 = \arctan \left( \frac{\sqrt{1 - x_1^2}}{x_1} \right) \\
\varphi_2 = \arctan \left( \frac{\sqrt{1 - x_1^2 - x_2^2}}{x_2} \right) \\
\vdots \\
\varphi_{n-1} = \arctan \left( \frac{\sqrt{1 - \sum_{i=1}^{n-1}x_i^2}}{x_{n-1}} \right)
\end{pmatrix}^T
\end{align*}
For $n = 2$,
\begin{align*}
    \vec{G}^{-1}:\tilde{S}^n & \longrightarrow ]0,\frac{\pi}{2}[ \\
    x_1 & \mapsto \varphi_1 = \arctan \left( \frac{\sqrt{1 - x_1^2}}{x_1} \right) 
\end{align*}
Using $\arctan'(x)=\frac{1}{(1+x^2)}$, we get: 
\begin{align*}
    \left( \frac{\partial \vec{G}^{-1}(z) }{\partial z} \Big|_{z=x_1 } \right)  & = \frac{-1}{\sqrt{1-x_1^2}} \;.
\end{align*}
Thus,
\begin{equation*}
    \left| \det{\frac{\partial \vec{G}^{-1}(z) }{\partial z} \Big|_{z=x_1} } \right|=\frac{1}{\sqrt{1-x_1^2}}\;.
\end{equation*}
We now prove the lemma by induction. 
We assume it holds for the $n$-sphere, and show that it holds for the $(n+1)$\hyp sphere. 
$\left( \frac{\partial \vec{G}^{-1}(\vec{z}) }{\partial \vec{z}} \Big|_{z=x_1 } \right)$ is a lower triangular matrix.
Thus, the determinant is the product of the diagonal elements. 
Let $J_n^{-1}$ be the Jacobian of the inverse transformation of $\vec{G}^{-1}$ associated with the $n$-sphere coordinates, that is, mapping $(x_1, \ldots, x_{n-1})$ to $(\varphi_1, \ldots, \varphi_{n-1})$. It is a matrix of dimension $(n-1) \times (n-1)$. 
We have: 
\begin{equation*}
J^{-1}_{n+1} = 
\begin{bmatrix}
J^{-1}_n & \mathbf{0} \\
\mathbf{v}^\top & \alpha
\end{bmatrix}\; ,
\end{equation*}
where $\mathbf{0}$ is a column vector of zeros with  $n-1$ entries, $\mathbf{v}^\top $ is a row vector of non-zero elements of length $n-1$ and $\alpha = \frac{\partial \vec{G}^{-1}_{n}}{\partial x_{n}} =  \frac{-1}{\sqrt{1-\sum_{i=1}^{n}x_i^2}} $ is the diagonal element at position $(n, n)$.
Using the induction hypothesis, we get: 
\begin{equation*}
    |\det{J_{n+1}^{-1}}| = |\det{J_{n}^{-1}} \cdot \alpha| =  \prod_{k=1}^{n-1} \frac{1}{\sqrt{1-\sum_{i=1}^k x_i^2}} \cdot \frac{1}{\sqrt{1-\sum_{i=1}^{n}x_i^2}}  \;,
\end{equation*}
which completes the induction. 
\end{proof}
Moreover, since the spherical coordinates are independent and uniformly distributed on the interval $]0,\frac{\pi}{2}[$,  their joint probability distribution is given by: 
\begin{equation*}
    p(\varphi_1,\ldots, \varphi_{n-1}) =
\begin{cases} 
    \big(\frac{2}{\pi}\big)^{n-1} & \text{if } ~\forall 1 \leq i \leq n -1 ,~ \varphi_i \in ]0,\frac{\pi}{2}[, \\
    0 & \text{otherwise.}  
\end{cases}
\end{equation*}
We can now calculate the~\gls{PDF} of the Cartesian coordinates using~\cref{eq:vector_to_vector_change_variable_density_fct}. 
We substitute the expression for the joint probability distribution of the spherical coordinates,
\begin{equation*}
    p(\varphi_1,\ldots, \varphi_{n-1}) =p(\vec{G}^{-1}(x_1, \ldots, x_{n-1})) \;,   
\end{equation*}
and the determinant of the inverse of the Jacobian, 
\begin{equation*}
\det{\frac{d\vec{G}^{-1}(\vec z)}{d\vec{z}}\Bigg|_{\vec{z}=(x_1, \ldots, x_{n-1}) }}\;,  
\end{equation*} 
to find the~\gls{PDF} of the Cartesian coordinates:
\begin{equation*}
    p(x_1, \ldots, x_{n-1})=
    \begin{cases} 
    \big(\frac{2}{\pi}\big)^{n-1} \prod_{k=1}^{n-1} \frac{1}{\sqrt{1-\sum_{i=1}^k x_i^2}}& \text{ if } ~\forall 1 \leq i \leq n -1 ,~ x_i \in ]0,1[  \text{ and } \sum_{i=1}^{n-1} x_i^2 < 1,  \\
    0 & \text{ otherwise.}  
    \label{eq:probability_distribution_our_eigenvalues}
\end{cases}
\end{equation*}
Finally, the eigenvalues are obtained via the mapping:
\begin{align*}
    \vec{G}_\text{square}:\tilde{S}^n & \longrightarrow \text{Int}(\Delta^{n-1})\\
    (x_1, \ldots, x_{n-1}) &  \mapsto (x_1^2,\ldots,x_{n-1}^2)\;.
\end{align*}
Here, $\text{Int}(\Delta^{n-1})$ is the interior of $\Delta^{n-1}$, the standard $(n-1)$\hyp dimensional simplex:
\begin{equation*}
    \Delta^{n-1} = \{ x \in \mathbb{R}^{n} |  \sum_{i=1}^{n} x_i = 1, \quad x_i \geq 0 \quad \text{ for } i = 1, \ldots, n \}.
\end{equation*}
The inverse transformation of $\vec{G}_\text{square}$ is:
\begin{align*}
    \vec{G}_\text{square}^{-1}: \text{Int}(\Delta^{n-1}) & \longrightarrow \tilde{S}^n \\
    (\lambda_1, \ldots, \lambda_{n-1}) & \mapsto (\sqrt{\lambda_1}, \ldots, \sqrt{\lambda_{n-1}} )\;.
\end{align*}
The Jacobian of the inverse transformation is given by:
\begin{equation*}
    \frac{\partial \vec{G}_\text{square}^{-1}(\vec z)}{\partial \vec{z}}\Big|_{\vec{z}=(\lambda_1, \ldots, \lambda_{n-1})} = \operatorname{diag}\left(\frac{1}{2\sqrt{\lambda_1}}, \ldots, \frac{1}{2\sqrt{\lambda_{n-1}}} \right).
\end{equation*}
Using the same formula as in~\cref{eq:vector_to_vector_change_variable_density_fct}, we obtain the following result.
\begin{proposition}
    The joint \gls{PDF} of the non\hyp ordered eigenvalues $(\lambda_1, \ldots, \lambda_{n-1})$ is given by
\begin{equation}
    p(\lambda_1, \ldots, \lambda_{n-1})=
    \begin{cases} 
    \big(\frac{1}{\pi}\big)^{n-1} \prod_{k=1}^{n-1} \frac{1}{\sqrt{\lambda_k\big(1-\sum_{i=1}^k \lambda_i\big)}}& \text{ if } ~\forall 1 \leq i \leq n -1 ,~ \lambda_i \in ]0,1[  \text{ and } \sum_{i=1}^{n-1} \lambda_i < 1,  \\
    0 & \text{ otherwise.}  
\end{cases}
\label{eq:joint_probability_distribution_eigenvalues}
\end{equation}
\end{proposition}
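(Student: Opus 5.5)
The proposition follows by composing the two changes of variables already set up, using the change-of-variables formula in~\cref{eq:vector_to_vector_change_variable_density_fct} a second time. We start from the density of the Cartesian coordinates $(x_1,\ldots,x_{n-1})$ obtained just above. It combines the uniform joint density $(2/\pi)^{n-1}$ of the angles on $]0,\pi/2[^{n-1}$ with the Jacobian determinant from the preceding Lemma:
\begin{equation*}
p(x_1,\ldots,x_{n-1}) = \Big(\frac{2}{\pi}\Big)^{n-1}\prod_{k=1}^{n-1}\frac{1}{\sqrt{1-\sum_{i=1}^k x_i^2}}
\end{equation*}
on the open region $x_i\in]0,1[$, $\sum_{i=1}^{n-1} x_i^2<1$, and zero elsewhere.

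The first step is to check that $\vec G_\text{square}$ is a $C^1$ bijection between the relevant open sets. It maps the open region $\{x_i>0,\ \sum_{i=1}^{n-1} x_i^2<1\}$ onto the projected simplex interior $\{\lambda_i>0,\ \sum_{i=1}^{n-1}\lambda_i<1\}$. Its inverse $\lambda_i\mapsto\sqrt{\lambda_i}$ is smooth there because all $\lambda_i>0$. Bijectivity is immediate since we are restricted to the positive orthant, so no sign ambiguity arises.

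The second step is to evaluate the density at $x=\vec G_\text{square}^{-1}(\lambda)$. This amounts to substituting $x_i^2=\lambda_i$, which turns each factor $\sqrt{1-\sum_{i=1}^k x_i^2}$ into $\sqrt{1-\sum_{i=1}^k\lambda_i}$.

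The third step is to multiply by the absolute Jacobian determinant of the inverse map. That Jacobian is diagonal, so its determinant is
\begin{equation*}
\prod_{k=1}^{n-1}\frac{1}{2\sqrt{\lambda_k}} .
\end{equation*}
The factor $2^{-(n-1)}$ cancels the $2^{n-1}$ in $(2/\pi)^{n-1}$, leaving $\pi^{-(n-1)}$. Pairing each $1/\sqrt{\lambda_k}$ with the corresponding $1/\sqrt{1-\sum_{i\le k}\lambda_i}$ gives exactly the product in~\cref{eq:joint_probability_distribution_eigenvalues}. Outside the image domain the density vanishes, because the $x$-density does; this yields the second case.

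The only real subtlety is the support and the boundary. The map is only a bijection on the open sets, so we should note that the boundary (some $\lambda_i=0$, or $\sum_i\lambda_i=1$) has Lebesgue measure zero. It therefore does not affect the density, which justifies using strict inequalities throughout. A sanity check that $\int p=1$ is optional, since normalization is inherited automatically from the change of variables.
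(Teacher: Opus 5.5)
Your proposal is correct and follows the paper's own argument: you push the density $(2/\pi)^{n-1}\prod_{k=1}^{n-1}\bigl(1-\sum_{i\le k}x_i^2\bigr)^{-1/2}$ through $\vec{G}_\text{square}$, multiply by the diagonal Jacobian $\prod_{k=1}^{n-1}1/(2\sqrt{\lambda_k})$, and the factors of $2$ cancel. Your explicit remarks on the open domains and the measure-zero boundary add care that the paper leaves implicit.
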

\subsection{Expectation value of the von Neumann entropy \label{appendix:average_von_neumann_entropy}}
For a quantum state with density matrix $\rho$, the von Neumann entropy is given by~\cite[p.273]{Bengtsson_Zyczkowski_2006}:
\begin{equation*}
    S=-\operatorname{Tr} (\rho \ln \rho )\;,
\end{equation*}
where $\ln$ denotes the natural logarithm of a matrix. 
Expressing $\rho$ in its eigenbasis, $\rho = \sum_{j} \lambda_{j} \ket{j}\bra{j}$, the von Neumann entropy reduces to
\begin{equation*}
    S = -\sum_{j} \lambda_{j} \ln \lambda_{j}.
\end{equation*}
For a random quantum state of dimension $n$ drawn from our distribution, the expectation value of the von Neumann entropy reads:
\begin{align*}
   \langle \bm{S_n} \rangle & =  \langle -\sum_{j=1}^n \bm{\Lambda_{j}}\ln \bm{\Lambda_{j}} \rangle \\
   & =   -\sum_{j=1}^n  F_j \;,
\end{align*}
where $\langle \cdot \rangle$ denotes the expectation value with respect to our distribution, and $F_j \equiv \langle \bm{\Lambda_{j}}\ln \bm{\Lambda_{j}} \rangle$. 
First, we calculate $F_1$. 
To do so, we use the expression of $p(\lambda_1)$ derived in~\cref{annex:pdf_eigenvalues} and given by~\cref{eq:joint_probability_distribution_eigenvalues}.
Substituting this expression into the definition of $F_1$, we obtain:
\begin{equation*}
    F_1 = \int_0^1 p(\lambda_1)\lambda_1 \ln\left(\lambda_1\right)   d \lambda_1 = \int_0^1 \frac{\lambda_1 \ln\left(\lambda_1\right) }{\pi \sqrt{\lambda_1(1 - \lambda_1)}}  d \lambda_1.
\end{equation*}
Using the substitution $\sin^2 \theta = \lambda_1$, we have 
$d \lambda_1 = 2 \sin \theta \cos \theta d \theta$,
and the integration limits transform as:
\begin{equation*}
\lambda_1 = 0 \implies \theta = 0, \qquad \lambda_1 = 1 \implies \theta = \frac{\pi}{2}.
\end{equation*}
Thus, the integral becomes
\begin{align*}
    F_1 & =\frac{1}{\pi} \int_0^{\pi/2} \frac{\sin^2 \theta \ln \left( \sin^2 \theta \right)}{\sin \theta \cos \theta} \cdot 2 \sin \theta \cos \theta d \theta\\
    & =\frac{4}{\pi} \int_0^{\pi/2} \sin^2 \theta \ln \left( \sin \theta \right) d \theta .
\end{align*}
Using the trigonometric identity $\sin^2 \theta = \frac{1 - \cos (2\theta)}{2}$, we can rewrite the integral as:
\begin{align*}
F_1 = \frac{2}{\pi} \left( \int_0^{\pi/2} \ln (\sin \theta) d \theta - \int_0^{\pi/2} \cos (2 \theta) \ln (\sin \theta) d \theta \right).
\end{align*}
From~\cite[II~629,643]{fikhtengolts1947}, we have:
\begin{equation*}
    \int_0^{\pi/2} \ln (\sin \theta) d \theta = - \frac{\pi}{2}
\ln 2 \;.
\end{equation*}
Moreover, from~\cite[p.582,~4.384 Eq.~7]{gradshteyn2007}, we have:
\begin{equation*}
    \int_0^{\pi/2} \cos (2 n \theta) \ln (\sin \theta) d \theta = \left\{
    \begin{array}{ll}
        -\frac{\pi}{4n}, & \mbox{for } n >0 \\
        -\frac{\pi}{2} \ln 2, & \mbox{for } n=0 \; .
    \end{array}
\right. 
\end{equation*}
In particular, for the case $n=1$,
\begin{equation*}
    \int_0^{\pi/2} \cos (2 \theta) \ln (\sin \theta) d \theta =-\frac{\pi}{4} \;.
\end{equation*}
Combining these results, we obtain: 
\begin{equation*}
    F_1 = \frac{2}{\pi}(-\frac{\pi}{2}\ln(2)+\frac{\pi}{4}) = \frac{1}{2}-\ln2 = -a \;,
\end{equation*}
where $a=\ln(2)-\frac{1}{2} > 0$.
We now calculate $F_2$:
\begin{align*}
F_2  & = \int_0^1 p(\lambda_2)\lambda_2 \ln\left(\lambda_2\right)   d \lambda_2 \\& 
= \frac{1}{\pi^2}\int_0^1 \lambda_2 \ln\left(\lambda_2\right)   d \lambda_2 \int_0^{1-\lambda_2}  \frac{1 }{ \sqrt{\lambda_1(1 - \lambda_1)} \sqrt{\lambda_2(1-\lambda_1-\lambda_2)}}  d \lambda_1 \\& 
= \frac{1}{\pi^2}\int_0^1 \frac{1}{\sqrt{\lambda_1(1 - \lambda_1)}}   d \lambda_1 \int_0^{1-\lambda_1}  \frac{\lambda_2 \ln\left(\lambda_2\right) }{  \sqrt{\lambda_2(1-\lambda_1-\lambda_2)}}  d \lambda_1 \;.
\end{align*}
Let us define the following integral, where $s=\lambda_1$:
\begin{align}
A(s) & \equiv \int_0^{1-s} \frac{\lambda_2 \ln \lambda_2}{\sqrt{\lambda_2(1 -s- \lambda_2)}} d \lambda_2 \label{eq:A_recursion_F_2} \\
&= \int_0^{1-s} \frac{\lambda_2 \ln \lambda_2}{\sqrt{\frac{\lambda_2}{{1-s}}(\frac{1 -s}{1-s}- \frac{\lambda_2}{1-s})}}  \frac{d \lambda_2}{1-s} \nonumber \;.
\end{align}
Using the substitution: $x=\frac{\lambda_2}{1 - s}$, we have $dx=\frac{d\lambda_2}{1 - s}$ and the integration limits transform as:
\[
\lambda_2 = 0 \implies x = 0, \qquad \lambda_2 = 1-s \implies x = 1.
\]
Thus, the integral becomes
\begin{align*}
    A(s)  & = (1-s) \int_0^1 \frac{x \ln (x(1 - s))}{\sqrt{x(1-x)}} dx \\
    & = (1-s) \Big( \int_0^1 \frac{x \ln (x)}{\sqrt{x(1-x)}} dx + \ln(1 - s)  \int_0^1 \sqrt{ \frac{x }{1-x}} dx \Big) \;.
\end{align*}
The first term in the parentheses is $-\pi a$ from the calculation of $F_1$, and the second term is $\ln(1-s)\frac{\pi}{2}$.
Therefore, 
\begin{equation*}
A(s)  = \pi(1-s)\left(-a + \frac{1}{2}\ln(1-s)\right).\end{equation*}
Substituting this expression back into $F_2$, we obtain:
\begin{align*}
    F_2 & = \frac{1}{\pi} \int_0^1 \frac{(1-\lambda_1)}{\sqrt{\lambda_1 (1-\lambda_1)}}  \Big( -a  + \ln(1 - \lambda_1) \frac{1}{2 }\Big) d\lambda_1\\
    & =  \frac{1}{\pi} \Big( -a \int_0^1 \sqrt{ \frac{1-\lambda_1}{\lambda_1 }} d\lambda_1   + \frac{1}{2}\int_0^1 \frac{(1-\lambda_1)}{\sqrt{\lambda_1 (1-\lambda_1)}}  \ln(1 - \lambda_1) \Big)d\lambda_1 \\
    & =  \frac{1}{\pi} \Big( - \frac{a \pi}{2} + \frac{1}{2}\int_0^1 \frac{(1-\lambda_1)}{\sqrt{\lambda_1 (1-\lambda_1)}}  \ln(1 - \lambda_1) \Big)d\lambda_1 \;.
\end{align*}
Using the substitution $\lambda=1-\lambda_1$ in the remaining integral, we have $d\lambda=-d\lambda_1$, and the integration limits transform as:
\begin{equation*}
    \lambda_1 = 0 \implies \lambda = 1, \qquad \lambda_1 = 1 \implies \lambda = 0.
\end{equation*}
Thus, the integral becomes
\begin{equation*}
    F_2 =  \frac{1}{\pi} \Big( - \frac{a \pi}{2} + \frac{1}{2}\int_0^1 \frac{\lambda}{\sqrt{\lambda (1-\lambda)}}  \ln( \lambda) \Big)d\lambda \; .
\end{equation*}
Recognizing the same integral as in the computation of $F_1$, we conclude that:
\begin{equation*}
    F_2 = \frac{1}{\pi} \Big(-\frac{a\pi}{2}-\frac{a \pi }{2} \Big) = -a = \frac{1}{2} F_1 -\frac{a}{2^1}.
\end{equation*}
\begin{lemma}
For $1 \leq j+1 \leq n-1$, the following recursion relation holds:
\begin{equation}
    F_{j+1}=\frac{F_j}{2}-\frac{a}{2^j}. 
    \label{eq:F_j_induction}
\end{equation}   
\end{lemma}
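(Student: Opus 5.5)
The recursion follows from a multiplicative self-similarity of the spherical parametrization of \cref{annex:generating_the_eigenvalues_uniform}, rather than from a direct attack on the $j$-fold integral against the joint density \cref{eq:joint_probability_distribution_eigenvalues}. The $F_2$ computation above already uses this structure implicitly: it rescales $\lambda_2=(1-s)x$ and reuses the $F_1$ integral.

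The key observation is as follows. For $1\le j+1\le n-1$,
\begin{equation*}
\bm{X_{j+1}}=\sin(\bm{\varPhi_1})\,\Big(\cos(\bm{\varPhi_{j+1}})\prod_{k=2}^{j}\sin(\bm{\varPhi_k})\Big),
\end{equation*}
which gives
\begin{equation*}
\bm{\Lambda_{j+1}}=\bm{T}\,\bm{\Lambda'_j}, \qquad \bm{T}\equiv\sin^2(\bm{\varPhi_1}),
\end{equation*}
where $\bm{\Lambda'_j}$ is the $j$-th eigenvalue built by the same rule from the shifted angles $\bm{\varPhi_2},\ldots,\bm{\varPhi_{j+1}}$. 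Two facts make this useful.
\begin{itemize}
\item[(i)] $\bm{T}$ and $\bm{\Lambda'_j}$ are independent, since they depend on disjoint sets of the independent angles.
\item[(ii)] $\bm{\Lambda'_j}$ has the same law as $\bm{\Lambda_j}$. For $j\le n-1$, $\bm{\Lambda_j}=\cos^2(\bm{\varPhi_j})\prod_{k<j}\sin^2(\bm{\varPhi_k})$ involves only $\bm{\varPhi_1},\ldots,\bm{\varPhi_j}$, all i.i.d. uniform on $[0,\pi/2]$. Its law is therefore independent of $n$, and relabelling the angles does not change it.
\end{itemize}
Only cosine-terminated indices occur for $j+1\le n-1$, so the sine-terminated last coordinate never enters.

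Next I would expand $\ln\bm{\Lambda_{j+1}}=\ln\bm{T}+\ln\bm{\Lambda'_j}$ and use independence:
\begin{equation*}
F_{j+1}=\mathbb{E}[\bm{T}]\,\mathbb{E}[\bm{\Lambda'_j}\ln\bm{\Lambda'_j}]+\mathbb{E}[\bm{T}\ln\bm{T}]\,\mathbb{E}[\bm{\Lambda'_j}].
\end{equation*}
Each factor is already known.
\begin{itemize}
\item $\mathbb{E}[\bm{T}]=\mathbb{E}[\sin^2\bm{\varPhi}]=\tfrac12$, by \cref{annex:computation_of_exponential_decay}.
\item $\mathbb{E}[\bm{\Lambda'_j}\ln\bm{\Lambda'_j}]=F_j$, by (ii).
\item $\mathbb{E}[\bm{\Lambda'_j}]=\mathbb{E}[\bm{\Lambda_j}]=2^{-j}$, by \cref{annex:computation_of_exponential_decay}.
\item $\mathbb{E}[\bm{T}\ln\bm{T}]=F_1=-a$. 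This holds because $\varphi\mapsto\pi/2-\varphi$ preserves the uniform law on $[0,\pi/2]$, so $\sin^2\bm{\varPhi_1}$ and $\cos^2\bm{\varPhi_1}=\bm{\Lambda_1}$ are equal in distribution.
\end{itemize}
Substituting gives $F_{j+1}=\tfrac12F_j-a\,2^{-j}$. As a consistency check, $j=1$ reproduces the value $F_2=-a$ computed above.

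The only step needing care is justifying (i) and (ii). Independence is immediate from the sampling scheme of \cref{annex:generating_the_eigenvalues_uniform}. For the identity in law, I would state explicitly that the marginal of $\bm{\Lambda_j}$ does not depend on the ambient dimension $n$ as long as $j\le n-1$, which is exactly why the lemma restricts to $j+1\le n-1$. One also needs integrability to split the expectation; this holds since $|t\ln t|\le e^{-1}$ on $[0,1]$.

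If a purely analytic route is preferred, matching the $F_2$ derivation, one can instead integrate the density of \cref{eq:joint_probability_distribution_eigenvalues} over $\lambda_2,\ldots,\lambda_{j+1}$ with the substitution $\lambda_k=(1-\lambda_1)\mu_{k-1}$. This factorizes the density into $p(\lambda_1)$ times the $(n-1)$-dimensional density in $\mu$, and reproduces the same two terms. That route is heavier on bookkeeping, so I would use the probabilistic factorization.
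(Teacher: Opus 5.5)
Your proof is correct, but it is organized differently from the paper's.

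\textbf{The paper's route.} The paper stays inside the density formalism. It writes $F_{j+1}$ as a nested integral against the joint density of $(\lambda_1,\ldots,\lambda_{j+1})$. It integrates out the innermost variable $\lambda_{j+1}$ with the function $A(s)=\pi(1-s)\bigl(-a+\tfrac12\ln(1-s)\bigr)$ from the $F_2$ computation, where $s=\sum_{i\le j}\lambda_i$. It then identifies the $\ln(1-s)$ piece with $F_j/2$, and evaluates the leftover integral $B=(\pi/2)^j$ by collapsing it one variable at a time.

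\textbf{How the two decompositions differ.} In your language, the paper peels off the \emph{last} angle: $\bm{\Lambda_{j+1}}=\bm{R_j}\cos^2\bm{\varPhi_{j+1}}$ with $\bm{R_j}=1-\sum_{i\le j}\bm{\Lambda_i}=\prod_{k\le j}\sin^2\bm{\varPhi_k}$. You peel off the \emph{first} angle. The two resulting terms are parallel in both cases: $\tfrac12$ times an entropy-type term, plus $-a$ times $2^{-j}$. The symmetry each route needs is different.
\begin{itemize}
\item The paper's step ``$=F_j/2$'' actually requires $\mathbb{E}[\bm{R_j}\ln\bm{R_j}]=F_j$. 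That is the reflection $\lambda_j\mapsto 1-\sum_{i<j}\lambda_i-\lambda_j$, equivalently $\bm{\varPhi_j}\mapsto\pi/2-\bm{\varPhi_j}$, and the paper uses it without comment.
\item You instead need two facts: the law of $\bm{\Lambda_j}$ is invariant under relabelling the angles, and $\sin^2\bm{\varPhi_1}$ and $\cos^2\bm{\varPhi_1}$ are equal in distribution. You state and justify both.
\end{itemize}

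\textbf{What each buys.} Your argument is shorter, reuses $\mathbb{E}[\bm{\Lambda_j}]=2^{-j}$ from the earlier appendix instead of recomputing $B$, and makes independence and integrability explicit. The paper's argument reduces literally to the $F_2$ calculation at $j=1$ and needs nothing beyond the joint density already derived. Both arguments tacitly require $j\ge 1$, since $F_0$ is undefined despite the stated range.
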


\begin{proof}
By the definition,
\begin{equation*}
F_{j+1} = \int_0^1 d \lambda_{j+1} p(\lambda_{j+1})\lambda_{j+1}\ln(\lambda_{j+1}).    
\end{equation*}
Using the explicit form of $p(\lambda_{j+1})$ and exchanging the order of integration, we obtain:
\begin{align*}
    F_{j+1} & = \frac{1}{\pi^{j+1}} \int_0^1 d \lambda_1 \ldots  \int_0^{1-\sum_{i=1}^{j-1} \lambda_i} d \lambda_{j} \int_0^{1-\sum_{i=1}^{j} \lambda_i}  d \lambda_{j+1} \Big( \prod_{k=1}^j \frac{1}{\sqrt{\lambda_k(1-\sum_{i=1}^k\lambda_{i})}} \Big) \frac{\lambda_{j+1} \ln(\lambda_{j+1})}{\sqrt{\lambda_{j+1}(1 - \sum_{i=1}^{j+1} \lambda_{i})}}\\
    & = \frac{1}{\pi^{j+1}} \int_0^1 d \lambda_1 \ldots  \int_0^{1-\sum_{i=1}^{j-1} \lambda_i} d \lambda_{j}\Big( \prod_{k=1}^j \frac{1}{\sqrt{\lambda_k(1-\sum_{i=1}^k\lambda_{i})}} \Big) \int_0^{1-s}  d \lambda_{j+1}  \frac{\lambda_{j+1} \ln(\lambda_{j+1})}{\sqrt{\lambda_{j+1}(1 - s-\lambda_{j+1} )}}\\
\end{align*}
where $s=\sum_{i=1}^{j} \lambda_i $. 
The innermost integral is precisely $A(s)$ defined in~\cref{eq:A_recursion_F_2}, yielding 
\begin{align*}
    F_{j+1} 
     = & \frac{1}{\pi^{j+1}} \int_0^1 d \lambda_1 \ldots  \int_0^{1-\sum_{i=1}^{j-1} \lambda_i} d \lambda_{j}\Big( \prod_{k=1}^j \frac{1}{\sqrt{\lambda_k(1-\sum_{i=1}^k\lambda_{i})} }\Big) \pi (1-s) \Big( -a  + \ln(1 - s) \frac{1}{2 }\Big) \\
    = &  \frac{1}{2}\frac{1}{\pi^{j}} \int_0^1 d \lambda_1 \ldots  \int_0^{1-\sum_{i=1}^{j-1} \lambda_i} d \lambda_{j} (1-\sum_{i=1}^{j} \lambda_i) \ln(1-\sum_{i=1}^{j} \lambda_i)\Big(  \prod_{k=1}^j \frac{1}{\sqrt{\lambda_k(1-\sum_{i=1}^k\lambda_{i})}} \Big) \\
    & -a \frac{1}{\pi^{j}} \int_0^1 d \lambda_1 \ldots  \int_0^{1-\sum_{i=1}^{j-1} \lambda_i} d \lambda_{j} (1-\sum_{i=1}^{j} \lambda_i) \Big(  \prod_{k=1}^j \frac{1}{\sqrt{\lambda_k(1-\sum_{i=1}^k\lambda_{i})} }\Big)  \\
    = &  \frac{F_{j}}{2}-a \frac{1}{\pi^{j}} \int_0^1 d \lambda_1 \ldots  \int_0^{1-\sum_{i=1}^{j-1} \lambda_i} d \lambda_{j} (1-\sum_{i=1}^{j} \lambda_i) \Big(  \prod_{k=1}^j \frac{1}{\sqrt{\lambda_k(1-\sum_{i=1}^k\lambda_{i})} }\Big)   \;.
\end{align*}
Then, we have:
\begin{align*}
    B \equiv &  \int_0^1 d \lambda_1 \ldots  \int_0^{1-\sum_{i=1}^{j-1} \lambda_i} d \lambda_{j} (1-\sum_{i=1}^{j} \lambda_i) \Big(  \prod_{k=1}^j \frac{1}{\sqrt{\lambda_k(1-\sum_{i=1}^k\lambda_{i})} }\Big) \\
    = &  \int_0^1 d \lambda_1 \ldots \int_0^{1-\sum_{i=1}^{j-2} \lambda_i} d \lambda_{j-1}   \Big(  \prod_{k=1}^{j-1} \frac{1}{\sqrt{\lambda_k(1-\sum_{i=1}^k\lambda_{i})} }\Big)  \int_0^{1-\sum_{i=1}^{j-1} \lambda_i} d \lambda_{j} \sqrt{  \frac{1-\sum_{i=1}^{j} \lambda_i}{\lambda_j }}\\ 
    = &  \int_0^1 d \lambda_1 \ldots \int_0^{1-\sum_{i=1}^{j-2} \lambda_i} d \lambda_{j-1}   \Big(  \prod_{k=1}^{j-1} \frac{1}{\sqrt{\lambda_k(1-\sum_{i=1}^k\lambda_{i})} }\Big) (1-\sum_{i=1}^{j-1}\lambda_i)\frac{\pi}{2}\\
    & \quad \vdots \\
    = & \Big(\frac{\pi}{2}\Big)^j\\ 
\end{align*}
using the recurring structure. 
Substituting this result yields~\cref{eq:F_j_induction}.
\end{proof}
\begin{lemma}
    For $1\leq j+1 \leq n-1$, we have
\begin{equation}
    F_{j+1} = - \frac{(j+1)a}{2^j}
    \label{eq:f_expression_formula}
\end{equation}
where $a = \ln(2)-\frac{1}{2}$.
\end{lemma}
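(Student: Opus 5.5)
We prove \cref{eq:f_expression_formula} by induction on $j$, using the recursion \cref{eq:F_j_induction} from the previous lemma together with the explicit base value $F_1 = \frac{1}{2} - \ln 2 = -a$ computed above.

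\emph{Base case.} For $j=0$ the claim reads $F_1 = -\frac{1\cdot a}{2^0} = -a$, which is exactly the value of $F_1$ obtained from the integral identities of~\cite{fikhtengolts1947,gradshteyn2007}. As a consistency check, the case $j=1$ gives $F_2 = -\frac{2a}{2} = -a$. This agrees with the direct computation of $F_2$ above, so the induction anchors correctly whichever base one prefers.

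\emph{Inductive step.} Assume $F_j = -\frac{j a}{2^{j-1}}$ for some $j$ with $1\le j$ and $j+1 \le n-1$. The recursion \cref{eq:F_j_induction} applies in this range and gives
\begin{equation*}
F_{j+1} = \frac{F_j}{2} - \frac{a}{2^j} = -\frac{j a}{2^{j}} - \frac{a}{2^j} = -\frac{(j+1)a}{2^j},
\end{equation*}
which is the claimed formula at index $j+1$. The argument is purely algebraic once the recursion is available.

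\emph{Range and the index $n$.} I expect the only genuine subtlety to be the index range. The recursion was derived only for $j+1 \le n-1$, because $\lambda_n$ is not one of the free variables of the joint density in \cref{eq:joint_probability_distribution_eigenvalues}; it is the dependent coordinate $1-\sum_{i<n}\lambda_i$. The statement is restricted accordingly, and I would note explicitly that the induction never leaves this range. $F_n$ must then be handled separately afterwards, for instance by recognizing $\Lambda_n = 1-\sum_{i=1}^{n-1}\Lambda_i$ as having the same form as the $(1-s)$ factor appearing in $A(s)$. That case is outside the present lemma.
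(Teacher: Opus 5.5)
Your proof is correct and is essentially the paper's argument: induction anchored at $F_1=-a$, with the inductive step being the one-line algebra $F_{j+1}=F_j/2-a/2^j=-ja/2^j-a/2^j=-(j+1)a/2^j$ obtained from the recursion of the previous lemma. The paper writes the step as $F_{j+1}\to F_{j+2}$ instead of $F_j\to F_{j+1}$. Your remarks on the admissible range $j\ge 1$ and on handling $F_n$ separately are harmless extras.
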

\begin{proof} We prove the result by recursion.
We have already shown that
\begin{equation*}
F_{0+1} = -a = -\frac{a(1+0)}{2^0} \;. 
\end{equation*}
Assume that~\cref{eq:f_expression_formula} holds for $j+1<n-1$, that is $F_{j+1} = -\frac{(j+1) a}{2^j}$. 
Then, using~\cref{eq:F_j_induction}, we obtain:
\begin{align*}
    F_{j+2} & = \frac{F_{j+1}}{2} -\frac{a}{2^{j+1}} \\
    & = - \frac{(j+2) a}{2^{j+1}} \;.
\end{align*}
which completes the recursion. 
\end{proof}

\begin{lemma}
\begin{equation*}
    F_n=F_{n-1}
\end{equation*}
\end{lemma}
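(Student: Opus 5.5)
The plan is to exploit the symmetry between the last two coordinates in the spherical parametrization. By construction in \cref{annex:generating_the_eigenvalues_uniform}, $\bm{\Lambda_{n-1}} = \bm{T}\cos^2(\bm{\varPhi_{n-1}})$ and $\bm{\Lambda_n} = \bm{T}\sin^2(\bm{\varPhi_{n-1}})$, where $\bm{T}=\prod_{j=1}^{n-2}\sin^2(\bm{\varPhi_j})$ depends only on $\bm{\varPhi_1},\ldots,\bm{\varPhi_{n-2}}$. Since $\bm{\varPhi_{n-1}}$ is uniform on $[0,\pi/2[$ and independent of $\bm{T}$, the reflection $\varphi\mapsto \pi/2-\varphi$ preserves its law and swaps $\cos^2$ and $\sin^2$. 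Hence the pairs $(\bm{T},\cos^2\bm{\varPhi_{n-1}})$ and $(\bm{T},\sin^2\bm{\varPhi_{n-1}})$ are equal in distribution, so $\bm{\Lambda_{n-1}}$ and $\bm{\Lambda_n}$ are identically distributed. Applying the measurable function $x\mapsto x\ln x$ (extended by $0$ at $x=0$) to both then gives $F_n=\langle\bm{\Lambda_n}\ln\bm{\Lambda_n}\rangle=\langle\bm{\Lambda_{n-1}}\ln\bm{\Lambda_{n-1}}\rangle=F_{n-1}$.

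To stay closer to the integral formalism of the preceding lemmas, I would add a second verification. Write $\lambda_n = 1-\sum_{i=1}^{n-1}\lambda_i$ and compute $F_n$ using the joint density of \cref{eq:joint_probability_distribution_eigenvalues}. Integrate out $\lambda_1,\ldots,\lambda_{n-2}$ last, exactly as in the proof of \cref{eq:F_j_induction}, and set $s=\sum_{i=1}^{n-2}\lambda_i$. The innermost integral over $\lambda_{n-1}\in(0,1-s)$ is then
\begin{equation*}
\int_0^{1-s}\frac{(1-s-\lambda_{n-1})\ln(1-s-\lambda_{n-1})}{\sqrt{\lambda_{n-1}(1-s-\lambda_{n-1})}}\,d\lambda_{n-1}.
\end{equation*}
The substitution $\lambda_{n-1}\mapsto 1-s-\lambda_{n-1}$ turns this into $A(s)$ from \cref{eq:A_recursion_F_2}. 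The weight $1/\sqrt{\lambda_{n-1}(1-s-\lambda_{n-1})}$ is symmetric under this reflection, which is why the substitution works. The remaining outer integrals are then identical to those defining $F_{n-1}$, which proves the claim.

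The only delicate point is integrability and the boundary behaviour, since $x\ln x\to 0$ and the density singularities are of the integrable form $x^{-1/2}$. These are the same issues already handled in the computation of $F_1$, so I expect no real obstacle.
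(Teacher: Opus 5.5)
Your first argument is correct and is essentially the paper's proof. The paper also reflects the independent uniform angle, $\varphi_{n-1}\mapsto\pi/2-\varphi_{n-1}$, to turn $\sin$ into $\cos$ and conclude that $\bm{\Lambda_n}$ and $\bm{\Lambda_{n-1}}$ are identically distributed; it merely writes this through a Dirac-delta expression for $p(\lambda_n)$, where you state it as equality in law of $(\bm{T},\cos^2\bm{\varPhi_{n-1}})$ and $(\bm{T},\sin^2\bm{\varPhi_{n-1}})$. Your second check, through the joint density and the reflection $\lambda_{n-1}\mapsto 1-s-\lambda_{n-1}$ that reduces both innermost integrals to $A(s)$, is also correct; it is not in the paper and is not needed.
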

\begin{proof}
By definition,
\begin{equation*}
    F_{n} \equiv \langle \bm{\Lambda_{n}} \ln \bm{\Lambda_{n}} \rangle = \int_0^1 d\lambda_n p(\lambda_n) \lambda_n \ln \lambda_n \;.
\end{equation*}
In~\cref{annex:pdf_eigenvalues}, we calculated the~\gls{PDF} $p(\lambda_j)$ for $1\leq j\leq n-1$, but we did not give an explicit expression for $p(\lambda_n)$. 
We therefore compute it here.
We note that
\begin{align*}
    p(\lambda_n) \equiv & ~p(\bm{\Lambda_n} = \lambda_n  )\\
    = & \int_0^{\pi/2} d\varphi_1 \ldots \int_0^{\pi/2} d\varphi_{n-2} \int_0^{\pi/2} d\varphi_{n-1} ~ \delta(\lambda_n,(\sin\varphi_1 \sin\varphi_2 \ldots \sin \varphi_{n-1})^2) \\
    & p(\bm{\varPhi_1}=\varphi_1, \ldots, \bm{\varPhi_{n-2}}=\varphi_{n-2}, \bm{\varPhi_{n-1}}=\varphi_{n-1})\\
    = & \int_0^{\pi/2} d\varphi_1 p(\bm{\varPhi_1}=\varphi_1) \ldots \int_0^{\pi/2} d\varphi_{n-2} p(\bm{\varPhi_{n-2}}=\varphi_{n-2}) \int_0^{\pi/2} d\varphi_{n-1} ~ \delta(\lambda_n - (\sin\varphi_1 \sin\varphi_2 \ldots \sin \varphi_{n-1})^2) \\
    & p(\bm{\varPhi_{n-1}}=\varphi_{n-1}) \;,
\end{align*}
using the fact that $\bm{\varPhi_1}, \ldots, \bm{\varPhi_{n-2}}, \bm{\varPhi_{n-1}}$ are independent random variables to transform the second line into the third line. 
Here, $\delta$ denotes the Dirac delta distribution. 
Since $\bm{\varPhi_{n-1}}$ is a uniformly distributed random variable on $[0, \pi/2[$, its~\gls{PDF} is given by:
\begin{equation*}
    p(\varphi_{n-1}) =
\begin{cases} 
    \frac{2}{\pi} & \text{ if } \varphi_{n-1} \in [0,\frac{\pi}{2}[, \\
    0 & \text{ otherwise. }  
\end{cases}
\end{equation*} 
We define
\begin{align*}
    I \equiv & \int_0^{\pi/2} d\varphi_{n-1} ~ \delta(\lambda_n - (\sin\varphi_1 \sin\varphi_2 \ldots \sin \varphi_{n-1})^2)~ p(\bm{\varPhi_{n-1}}=\varphi_{n-1}) \\
    = & \int_0^{\pi/2} d\varphi_{n-1} ~ \delta(\lambda_n-(\sin\varphi_1 \sin\varphi_2 \ldots \cos( \frac{\pi}{2}-\varphi_{n-1}))^2) ~  \frac{2}{\pi} \;.
\end{align*}
Using the substitution $\tilde{\varphi}_{n-1} = \frac{\pi}{2}-\varphi_{n-1}$, we get 
$d \tilde{\varphi}_{n-1}= - d \varphi_{n-1}$,
and the integration limits transform as
\[
\varphi_{n-1} = 0 \implies \tilde{\varphi}_{n-1} = \frac{\pi}{2}, \qquad \varphi_{n-1} = \frac{\pi}{2} \implies \tilde{\varphi}_{n-1} = 0.
\]
Thus,
\begin{equation*}
    I  = \int_{0}^{\frac{\pi}{2}} d\tilde{\varphi}_{n-1} ~ \delta(\lambda_n -(\sin\varphi_1 \sin\varphi_2 \ldots \cos( \tilde{\varphi}_{n-1}))^2) ~  \frac{2}{\pi} \; .
\end{equation*}
Now, we define a random variable $\bm{\tilde{\varPhi}_{n-1}}$ to be uniformly distributed on $[0,\pi/2[$. We have:
\begin{equation*}
    I = \int_{0}^{\frac{\pi}{2}} d\tilde{\varphi}_{n-1} ~ \delta(\lambda_n-(\sin\varphi_1 \sin\varphi_2 \ldots \cos( \tilde{\varphi}_{n-1}))^2) ~  p(\bm{\tilde{\varPhi}_{n-1}}=\tilde{\varphi}_{n-1}) \;.
\end{equation*}
Thus, 
\begin{align*}
    p(\bm{\Lambda_{n}}=\lambda_{n})
    = & \int_0^\pi d\varphi_1 p(\bm{\varPhi_1}=\varphi_1) \ldots \int_0^\pi d\varphi_{n-2} p(\bm{\varPhi_{n-2}}=\varphi_{n-2}) \int_0^{2\pi} d\tilde{\varphi}_{n-1} ~ \delta(\lambda_n-(\sin\varphi_1 \sin\varphi_2 \ldots \cos \tilde{\varphi}_{n-1})^2) \\
    & p(\bm{\tilde{\varPhi}_{n-1}}=\tilde{\varphi}_{n-1}) \\
    = & p(\bm{\Lambda_{n-1}}=\lambda_{n}) \;.
\end{align*}
Therefore, the random variables $\bm{\Lambda_{n-1}}$ and $\bm{\Lambda_{n}}$ are identically distributed, so $F_{n}=F_{n-1}$.
\end{proof}
\begin{proposition}
    The expectation value of the von Neumann entropy is:
    \begin{equation*}
        \langle \bm{S_n} \rangle =(\ln(2)-\frac{1}{2}) \Big(4- \frac{1}{2^{n-3}} \Big) \;.
    \end{equation*}
\end{proposition}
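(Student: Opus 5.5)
The plan is to assemble the entropy expectation directly from the three preceding lemmas. We have $\langle \bm{S_n}\rangle = -\sum_{j=1}^n F_j$. I would split off the last term:
\begin{equation*}
    \langle \bm{S_n}\rangle = -\sum_{j=1}^{n-1} F_j - F_n .
\end{equation*}
For $1\leq j\leq n-1$ the closed form \cref{eq:f_expression_formula} (with index shifted, $F_j = -j a/2^{j-1}$) applies. For the last term, the lemma $F_n = F_{n-1}$ gives $F_n = -(n-1)a/2^{n-2}$. Hence
\begin{equation*}
    \langle \bm{S_n}\rangle = a\left(\sum_{j=1}^{n-1} \frac{j}{2^{j-1}} + \frac{n-1}{2^{n-2}}\right), \qquad a=\ln 2-\tfrac12 .
\end{equation*}

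The remaining step is the finite arithmetico-geometric sum. I would differentiate the geometric series, $\sum_{j=1}^{m} j x^{j-1} = \frac{1-(m+1)x^m + m x^{m+1}}{(1-x)^2}$, and evaluate it at $x=1/2$. This gives $4 - (2m+4)/2^m$; with $m=n-1$ it equals $4 - (2n+2)/2^{n-1}$. Adding the extra term $(n-1)/2^{n-2} = (2n-2)/2^{n-1}$, the $n$-dependent numerators cancel and leave $4 - 4/2^{n-1} = 4 - 1/2^{n-3}$. Multiplying by $a$ gives the claimed formula.

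There is no real obstacle here, only bookkeeping. The points to watch are the index shift in \cref{eq:f_expression_formula} (it is stated for $F_{j+1}$), and that the last eigenvalue contributes $F_{n-1}$ rather than following the general pattern. As sanity checks I would verify $n=2$, where the formula gives $2a = -F_1-F_2$, consistent with $F_1=F_2=-a$ computed earlier, and the limit $n\to\infty$, which should recover \cref{eq:density-operator:area-law:exact}.
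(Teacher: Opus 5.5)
Your proposal is correct and follows the paper's own proof: you split off $F_n = F_{n-1}$, insert $F_j = -ja/2^{j-1}$ for $1\le j\le n-1$, and evaluate the resulting arithmetico-geometric sum. The only cosmetic difference is in that last evaluation. You use the differentiated geometric series $\sum_{j=1}^{m} j x^{j-1}$ directly, whereas the paper reindexes to $\sum_{j=0}^{n-2}(j+1)/2^{j}$ and splits it into $\sum j/2^{j}$ (via the Graham--Knuth--Patashnik identity) plus a geometric series; both routes give $4 - 1/2^{n-3}$.
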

\begin{proof}
Using the calculated expressions of $F_j$, we write:
\begin{align}
    \langle \bm{S_n} \rangle & = \sum_{j=1}^{n-1} \frac{aj}{2^{j-1}}+\frac{a(n-1)}{2^{n-2}} \nonumber \\
     & = a \Big( \sum_{j=0}^{n-2} \frac{j+1}{2^{j}} +\frac{(n-1)}{2^{n-2}} \Big) \nonumber \\
     &= a \Big( \sum_{j=0}^{n-2} \frac{j}{2^{j}}+\sum_{j=0}^{n-2} \frac{1}{2^{j}} +\frac{(n-1)}{2^{n-2}}\Big) \;. \label{eq:average_S_n}
\end{align}
Using the identity~\cite[p.~33, Eq.~2.26]{graham94}
\begin{equation*}
    \sum_{j=0}^{n} j x^j = \frac{x-(n+1)x^{n+1}+n x^{n+2}}{(1-x)^2},
\qquad x\neq1,
\end{equation*}
we obtain for $x=\frac{1}{2}$:
\begin{equation*}
\sum_{j=0}^{n-2} \frac{j}{2^j}
= 2-\frac{n}{2^{n-2}}.
\end{equation*}
The second term in parentheses~\cref{eq:average_S_n} is a geometric series with ratio $r=\frac{1}{2}< 1$, yielding:
\begin{equation*}
    \sum_{j=0}^{n-2} \frac{1}{2^{j}} = 2-\frac{1}{2^{n-2}}.
\end{equation*}
Substituting these expressions gives
\begin{align*}
    \langle \bm{S_n} \rangle & =    (\ln(2)-\frac{1}{2}) \Big(4- \frac{(n+1)}{2^{n-2}} + \frac{(n-1)}{2^{n-2}}\Big) \\
    & =  (\ln(2)-\frac{1}{2}) \Big(4- \frac{1}{2^{n-3}} \Big) \;.
\end{align*}

\end{proof}
\section{Generating a set of eigenvalues: Gaussian distribution around the maximally entangled point \label{annex:generating_the_eigenvalues_gaussian}}
\subsection{Point on the $n$-sphere corresponding to the maximally entangled state}
In this appendix, we determine the coordinates of the point on the $n$-sphere that leads to the maximally entangled state. 
We refer to this point as the \emph{maximally entangled point}.
A bipartite pure state $\rho \in \mathcal{H}_A \otimes \mathcal{H}_B$ is maximally entangled if the reduced density matrices of both subsystems are given by~\cite{Marco2016}:
\begin{equation*}
    \rho_A = \rho_B = \begin{bmatrix}
        \frac{1}{n} && \\
        & \ddots & \\
        && \frac{1}{n}
    \end{bmatrix}
\end{equation*}
where $n$ is the dimension of the smaller subsystem. 
In our case, we consider the bipartition $\mathcal{H}_A = \mathcal{H}_{1, \ldots, \ell} $, $\mathcal{H}_B =  \mathcal{H}_{\ell+1, \ldots, L}$ and $n=\min(d^{\ell},d^{L-\ell})$ where $\mathcal{H}_{1,\ldots, \ell}$ (respectively $\mathcal{H}_{\ell+1,\ldots, L}$) denotes the total Hilbert space associated with sites $1, \ldots, \ell$ (respectively $\ell +1, \ldots, L$). 
The von Neumann entropy of $\rho_A$ is then given by $S(\rho_A)=\ln (n)$~\cite[p.~505]{Nielsen_Chuang_2010}. 
By the bijection $\vec G$ defined in~\cref{eq:G_bijective}, there exists a unique point on the $n$-sphere whose squared Cartesian coordinates yield the eigenvalues of the reduced density matrix associated with the maximally entangled state on $\mathcal{H}_{1, \ldots, \ell} \otimes \mathcal{H}_{\ell+1, \ldots, L}$.
The eigenvalues of this state are $\lambda_i=\frac{1}{n}$ $\forall i \in \llbracket 1, n \rrbracket$. 
Consequently, the corresponding Cartesian coordinates on the $n$-sphere must satisfy $x_i=\frac{1}{\sqrt{n}}$. 
\begin{proposition}
    On $\mathcal{S}^{n}_+$ defined in~\cref{eq:definition_of_S_tilde_n}, the spherical coordinates corresponding to the Cartesian coordinates $x_i = \frac{1}{\sqrt{n}}$ are given by:
    \begin{equation*}
    \label{equ:induction_angle_values}
        \varphi_i=\arccos(\frac{1}{\sqrt{n-i+1}}), \qquad i = 1,\ldots,n-1. 
    \end{equation*}
\end{proposition}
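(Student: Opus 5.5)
We prove the statement by induction on $i$, using the explicit inverse map $\vec G^{-1}$ written out in the proof of the Jacobian lemma. On $\mathcal S^n_+$ that map reads $\varphi_k=\arctan\bigl(\sqrt{1-\sum_{j\le k}x_j^2}/x_k\bigr)$ with every $\varphi_k\in\,]0,\pi/2[$. Because of this restriction, $\cos\varphi_k>0$ and $\sin\varphi_k>0$, so each angle is determined uniquely by its cosine via $\arccos$. Uniqueness of the spherical coordinates on $\mathcal S^n_+$ already follows from the bijectivity of $\vec G$ stated in~\cref{eq:G_bijective}. It therefore suffices to check that the proposed angles are mapped to $x_i=1/\sqrt n$ for all $i$.

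The key identity is obtained directly from the forward map. Write $P_k=\prod_{j=1}^{k}\sin\varphi_j$, with $P_0=1$. Then $x_k=P_{k-1}\cos\varphi_k$ for $k\le n-1$ and $x_n=P_{n-1}$. Since the coordinates have unit norm, telescoping gives $P_{k-1}^2=1-\sum_{j<k}x_j^2$. Setting $x_j=1/\sqrt n$ yields
\begin{equation*}
P_{k-1}^2=\frac{n-k+1}{n}.
\end{equation*}
We would verify this inductively: $P_0^2=1$, and $P_k^2=P_{k-1}^2-x_k^2$, which reduces the step to subtracting $1/n$.

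With this identity in hand, we get $\cos\varphi_k=x_k/P_{k-1}=(1/\sqrt n)\big/\sqrt{(n-k+1)/n}=1/\sqrt{n-k+1}$, and hence $\varphi_k=\arccos\bigl(1/\sqrt{n-k+1}\bigr)$ for $k=1,\ldots,n-1$. For the converse direction, we substitute these angles into the forward map. Then $\sin^2\varphi_k=(n-k)/(n-k+1)$, so $P_{k-1}^2$ telescopes to $(n-k+1)/n$, and $x_k^2=P_{k-1}^2\cos^2\varphi_k=1/n$. For the last coordinate, $x_n^2=P_{n-1}^2=1/n$ as well. This consistency check confirms the result independently of the choice of branch.

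The only delicate point is the final angle $\varphi_{n-1}$, whose value is $\arccos(1/\sqrt2)=\pi/4$. The paper gives it the range $[0,2\pi[$ in the general parametrization, and the $\operatorname{atan2}$ formula contains a typo. On the orthant, however, $\varphi_{n-1}\in\,]0,\pi/2[$, and $x_{n-1}=x_n>0$ forces $\varphi_{n-1}=\pi/4$, in agreement with the formula. We would state this explicitly so that no ambiguity remains.
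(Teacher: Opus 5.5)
Your proof is correct and is essentially the paper's argument. The paper substitutes $x_i = 1/\sqrt{n}$ into the inverse map $\varphi_i = \operatorname{atan2}\bigl(\sqrt{\sum_{j>i}x_j^2},\,x_i\bigr)$ to get $\arctan\sqrt{n-i} = \arccos\bigl(1/\sqrt{n-i+1}\bigr)$; you compute the same quantity in cosine form, $\cos\varphi_k = x_k/P_{k-1}$ with $P_{k-1} = \sqrt{\sum_{j\ge k}x_j^2}$, and your converse check, uniqueness remark and noted typo in the paper's formula for $\varphi_{n-1}$ are all correct but not needed for the result.
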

\begin{proof}
On $\tilde{S}^n$, the spherical coordinates are related to the Cartesian coordinates by the transformation given in~\cref{eq:cartesian_to_spherical_coordinates}.
For $x_i=\frac{1}{\sqrt{n}}$, this relation yields
\begin{equation*}
    \varphi_i=\operatorname{atan2}(\sqrt{\frac{n-i}{n}},\frac{1}{\sqrt{n}})=\arctan(\sqrt{\frac{n-i}{n}}\sqrt{n})=\arccos(\frac{1}{\sqrt{n-i+1}}) \;.
\end{equation*}
\end{proof}

\subsection{Probability distribution of the eigenvalues}

Sampling eigenvalues in the vicinity of the maximally entangled point ensures that they remain close to those of a maximally entangled state and therefore exhibit a high degree of entanglement, characteristic of a volume law.
In this appendix, we describe how sets of eigenvalues are sampled around the maximally entangled point.
We choose the probability distribution of the points on the $n$-sphere to be a Gaussian distribution whose expectation value coincides with the maximally entangled point.
More precisely, we define the random vector $ (\bm{\varPhi_1} , \ldots, \bm{\varPhi_{n-1}} )^\text{T}$, and impose that each component $\bm{\varphi_i}$ is an independent Gaussian random variable distributed as:
\begin{equation*}
    \bm{\varPhi}_i \sim \mathcal{N}\!\left(
\arccos\!\left(\frac{1}{\sqrt{n-i+1}}\right),\, \sigma^2
\right),
\qquad i = 1,\ldots,n-1.
\end{equation*}
Here, $\sigma$ denotes the standard derivation and is chosen according to the desired proximity to the maximally entangled point.
\section{Warmup and sweeping procedure}
\label{annex:algorithm:sweeping}
In this appendix, we discuss the details of the algorithm to produce random~\glspl{MPS}.
Given a certain set of target Schmidt values $\{S^{[l]}_n\}_n$ for each bipartition, as well as a convergence threshold $\delta$, the algorithm generates~\gls{MPS} site matrices $\{A^{\sigma_l}\}_{l}$.
Optimally, these matrices represent an element of the $\sigma$\hyp ensemble, i.e., in each bipartition the set of Schmidt values $\{S^{[l]}_n\}_n$ of the state $\vert \psi[\{A^{\sigma_l}\}_{l}] \rangle$ is given by the target Schmidt values drawn according to the $\sigma$\hyp distribution.
However, due to the quantum marginal problem, this cannot generally be achieved by solving a set of linear equations generated by site matrices covering only a finite subset of the lattice sites.
Therefore, the first part of the algorithm is designed to generate a suitable initial guess state for a subsequent iterative optimization scheme, whose details are provided below.
The idea of the warmup procedure is to draw random site matrices, locally enforcing a given set of Schmidt values by solving~\cref{eq:svd:Sigma_Omega_B}
\begin{equation*}
    \Sigma^{[l]} \Omega^{[l]}M^{[l+1]} = U^{[l+1]}S^{[l+1]}W^{[l+1]} \; .
\end{equation*}

Here, we solve for the right\hyp canonical, vertically fused site matrices $M^{[l+1]}$ under the constraint that these matrices are drawn from the Haar\hyp random ensemble.
The problem is non\hyp linear in general, but this can be mitigated by choosing a certain gauge for the, at this point unknown, $U^{[l+1]}$ matrices.
However, in fixing this gauge, we choose a basis for the left bond basis in the current bipartition.
This is the crucial step, which may violate the representability condition, as we do not know, at this stage, whether the chosen basis is compatible with the set of Schmidt values for the bipartitions to the left of bond $l+1$.
We nevertheless found this procedure to provide suitable initial guess states for a subsequent algorithmic state and, in some cases, it even provided sufficiently precise representations of the $\sigma$\hyp ensemble.
However, in order to reliably obtain~\gls{MPS} representations we performed a subsequent iterative optimization detailed in the following.
The iterative optimization scheme is based on a simple~\gls{MPS} sweeping procedure.
We begin with a certain initial guess state represented by left\hyp canonical site matrices $\{A^{\sigma_l}\}_{sigma_l}$, which are obtained from the warmup procedure outlined above and in the main text.
The set of target Schmidt values for each bipartition between sites ${l,l+1}$ is denoted by $\{S^{[l]}\}$.
In a right\hyp to\hyp left sweep, the update procedure per site is then as follows:
\begin{enumerate}
    \item Stack the site matrices $M^{\sigma_l}$ row-wise to obtain \begin{equation}
    M^{[l]}
    =
    \left(
        \begin{array}{c}
            M^{\sigma_{l}=0} \\ 
            \vdots \\
            M^{\sigma_{l}=d-1}
        \end{array}
    \right) \; , 
    \end{equation}
    where $d$ again denotes the local Hilbert space dimension.
    \item Perform a~\gls{SVD} $M^{[l]} = U \tilde S^{[l]} V^{[l]}$
    \item Compute the distance $\epsilon_l = \Vert S^{[l]} - \tilde S^{[l]}\Vert$ measuring the deviation between the actual Schmidt values of the~\gls{MPS} and the targeted Schmidt values
    \item Vertically split 
    \begin{equation} V^{[l]} =
    \left(
        \begin{array}{c}
            V^{\sigma_{l}=0} \\ 
            \vdots \\
            V^{\sigma_{l}=d-1}
        \end{array}
    \right) \; , \end{equation} 
    and replace the current site tensors $V^{\sigma_l}\mapsfrom M^{\sigma_l}$. 
    \item Contract $U S^{[l]}$ into the next site tensors $A^{\sigma_{l-1}}U S^{[l]} \mapsfrom A^{\sigma_{l-1}}$.
\end{enumerate}
Once the site tensors $A^{\sigma_1}$ have been updated, the total distance to the target Schmidt value distributions $\epsilon = \sum_{l=1}^{L-1} \epsilon_l$ is evaluated.
If the convergence threshold is reached, i.e., $\epsilon < \delta$, the procedure can be terminated.
Otherwise, continue by sweeping reverting the sweep direction and repeat the above steps, now working on the left\hyp canonical $A^{\sigma_l}$ tensors.
\newpage

\end{document}